\documentclass{article}
\usepackage{setspace}
\usepackage{braket}
\usepackage{bbm}
\usepackage{tcolorbox}
\usepackage{graphicx}
\usepackage{amsthm}
\usepackage{hyperref}
\hypersetup{
    colorlinks,
    citecolor= orange,
    filecolor= blue,
    linkcolor= blue,
    urlcolor= green
}
\usepackage{multicol}
\usepackage{mathrsfs}
\usepackage{appendix}
\usepackage{amssymb}
\usepackage{relsize}
\usepackage{amsthm}
\usepackage{ amssymb }
\usepackage{amsmath}
\usepackage[utf8]{inputenc}
\usepackage[english]{babel}
\usepackage[a4paper, total={6in, 9.2in}]{geometry}
\usepackage{amsmath}

\newtheorem{definition}{Definition}
\newtheorem{definition2}{Lemma}

\newtheorem{definition4}{Theorem}
\newtheorem{definition5}{Remark}

\newtheorem{Co}{Corollary}
\usepackage[T1]{fontenc}
\usepackage{lmodern}

\usepackage{tikz-cd}

\begin{titlepage}

\title{SBSCV}
\author{Alberto Acevedo}

\date{2023}

\end{titlepage}

\begin{document}

\thispagestyle{plain}
\begin{center}
   
    \textbf{Spectrum Broadcast Structures from von Neumann type interaction Hamiltonians}

    \vspace{0.4cm}
    
    \textbf{Alberto Acevedo$^{1}$, Jaros\l aw K. Korbicz$^{2}$, Janek Wehr $^{1}$ }

      \vspace{0.4cm}
      \textbf{$^{1}$The University of Arizona, Tucson, USA}
      
      \vspace{0.2cm}
      
      \textbf{$^{2}$Center for Theoretical Physics, Polish Academy of Sciences, Warsaw, Poland}

    \vspace{0.9cm}
    \textbf{Abstract:}
  In this paper, we contribute to the mathematical foundations of the recently established theory of Spectrum Broadcast Structures (SBS). These are multipartite quantum states, encoding an operational notion of objectivity and exhibiting a more advanced form of decoherence. We study SBS in the case of a central system interacting with $N$ environments via the von Neumann-type measurement interactions, ubiquitous in the theory of open quantum systems. We state and prove a novel sufficient condition for SBS to arise dynamically 
for finite-dimensional systems. The condition is based on the Gram-Schmidt orthogonalization rather than on the Knill-Barnum error estimation used before.
\end{center}
\onehalfspacing

\

\section{Introduction}

\;\;\;In recent times significant attention has been given to decoherence processes, not only because of the rapid development of quantum technologies but also in the connection 
with such fundamental problems as the quantum-to-classical transition  (we refer to e.g. \cite{schloss} for a comprehensive introduction). In this context, especially interesting is the idea of so-called quantum Darwnism \cite{Zurek}\cite{Zurek2}. It recognizes that one of the features of the macroscopic world is its objective character and seeks to explain it from quantum theory.  A certain property, e.g. a position of a system of interest, becomes objective if it leaks into the environment during the decoherence process and is redundantly stored there in many copies, making it accessible to multiple observations. 
This idea fills the foundational gap between the subjective, observer-dependent character of quantum observations and the objective character of the macroscopic world. 
Later a stronger form of quantum Darwinism was proposed, named \emph{Spectrum Broadcast Structures} (SBS) \cite{JKone} \cite{JKtwo} \cite{JKthree} \cite{kor5}, where objectivity is encoded directly in the structure of multipartite quantum states.
SBS states satisfy the information-theoretic conditions used to test for quantum Darwinism but the opposite is not always true
\cite{Olaya}. In this work, we study the formation of SBS states for an important class of systems, interacting with the environment via von Neumann type measurement interaction.

We begin by quoting an ideal definition of objectivity, proposed in \cite{Zurek}:

\begin{definition}
\label{def:objectivity} A state of the system $S$ exists objectively if many observers can find out the state of $S$ independently, and without perturbing it.
\end{definition}

Spectrum Broadcast Structures, defined below, are a mathematical formulation of Definition \ref{def:objectivity}. 

\begin{definition}{SBS:}
\label{eqn:turk4}
A Spectrum Broadcast Structure is a multipartite state of a central system $S$ and an environment 
 $E$, consisting of sub-environments $E^{1}, E^{2},..., E^{N_{E}}$:
\begin{equation}
\label{eqn:morocco}
\boldsymbol{\hat{\rho}} = \sum_{i}p_{i}|i\rangle\langle i|\otimes \bigotimes_{k= 1}^{N_{E}}\boldsymbol{\hat{\rho}}_{i}^{E^{k}}
\end{equation}
where $\{|i\rangle\}_{i}$ is some basis in the system's space, called the pointer basis \cite{Zurek3}, $p_{i}$ are probabilities summing up to one, and all states $\boldsymbol{\hat{\rho}}_{i}^{E^{k}}$ are perfectly distinguishable in the following sense:
\begin{equation}\label{eq:dist}
F\big(\boldsymbol{\hat{\rho}}_{i}^{E^{k}},\boldsymbol{\hat{\rho}}_{j}^{E^{k}}\big) =  0
\end{equation}
for all $i\neq j$ and for all $k = 1,...,N_{E}$, where $F(\cdot,\cdot)$ is the quantum fidelity defined as $F\big(\boldsymbol{\hat{\rho}},\boldsymbol{\hat{\sigma}}\big):=\big\|\sqrt{\boldsymbol{\hat{\rho}}}\sqrt{\boldsymbol{\hat{\sigma}}}\big\|_{1}^{2}$ \cite{fuchs}.
\end{definition}

\;\;\;In this paper, we study the convergence of multipartite states evolving dynamically in time as described in Section \ref{sec:monitoring}, to an SBS state as defined in Definition \ref{eqn:turk4}. For any fixed time, the multipartite states in question will not be SBS states. However, trace distance estimates developed in this work may be employed for estimating the proximity of a multipartite state to an SBS state. 

The SBS states comply with Definition \ref{def:objectivity} in the following way:  measuring each of the environments $E^{k}$ independently via projections onto the supports of the states  $\boldsymbol{\hat{\rho}}_{i}^{E^{k}}$, one obtains an outcome $i$ with certainty, due to the orthogonality of the supports for different $i$, guaranteed by the perfect distinguishability condition \eqref{eq:dist}. These outcomes,
identifying in which of the pointer states $|i\rangle$ the system is, are the same as if one  measured the system directly and moreover the same for every sub-environment $E^{k}$, which is due to the particular structure \eqref{eqn:morocco}. This leads to an agreement among observators, which is a cornerstone of the notion of objectivity. Moreover, as one can easily check, the average post-measurement state, averaged over the outcomes, is again the same state \eqref{eqn:morocco}. This can be viewed as a formalization of the non-disturbance condition and is actually related to the famous  Bohr vs. Einstein-Podolsky-Rosen  
debate over the completeness of quantum theory \cite{kor5}. In \cite{JKone} the opposite was also proven, namely that starting from Definition \ref{def:objectivity} and using an additional assumption of so-called strong independence, one obtains the SBS form as the only compatible state (for relaxations without the strong independence assumption, we refer to \cite{kor5}).  

An interesting problem arises. Namely, how to recognize if during the evolution the state of a multipartite system approaches the SBS form  \eqref{eqn:morocco}, implying the state of the system becomes objective in the sense of Definition \ref{def:objectivity}. This problem was studied for a number of open system models (see e.g. \cite{JKtwo,models1,models2}), as well as in a more general setting in \cite{JKthree}. There, an upper bound was found on the trace distance between the actual state of the system and the nearest SBS state under the assumption of von Neumann-type measuring interaction, which we define in Section \ref{sec:monitoring} and which is by far the most common form of an interaction used in the theory of open quantum systems \cite{schloss} (sometimes called pure dephasing interaction). However, the derivation in \cite{JKthree} uses the so-called Knill-Barnum bound \cite{knill} on the probability of discrimination error outside its known range of applicability. Here, we will take another path, circumventing the use of the Knill-Barnum theorem.

\section{Statement of the problem and existing results}
\;\;\;Before proceeding we introduce the concepts of \emph{positive operator-valued measure} (POVM) and \emph{quantum measurement} and mention that all of the Hilbert spaces in this paper are assumed to be separable. 
\begin{definition}{POVM:}
Consider an arbitrary Hilbert space $\mathscr{H}$. A trace-preserving POVM is a set of positive semi-definite operators $\big\{\mathbf{\hat{M}}_{i}^{\dagger}\mathbf{\hat{M}}_{i}\big\}_{i}$ acting in $\mathscr{H}$ that sum to the identity operator. i.e. 
\begin{equation}
\label{eqn:res}
\sum_{i}\mathbf{\hat{M}}_{i}^{\dagger}\mathbf{\hat{M}}_{i} = \mathbb{I}_{\mathscr{H}}
\end{equation}
\end{definition}
\begin{definition5}
The POVM may consist of an uncountable set of semi-definite operators as well. In such case the analogous set of operators, e.g. $\mathbf{\hat{M}}_{x}^{\dagger}\mathbf {\hat{M}}_{x}$ ($x\in\mathbb{R}$) must satisfy the constraint analogous to (\ref{eqn:res}), i.e.
\begin{equation}
\int\mathbf{\hat{M}}_{x}^{\dagger}\mathbf{\hat{M}}_{x}dx = \mathbb{I}_{\mathscr{H}}
\end{equation}
\end{definition5}
\begin{definition}{Quantum Measurement:}
Fix a Hilbert space $\mathscr{H}$ and let $\mathcal{S}\big(\mathscr{H}\big)$ be the set of density operators acting in $\mathscr{H}$. Let $\boldsymbol{\hat{\rho}}\in \mathcal{S}\big(\mathscr{H}\big)$, the state of a system after a measurement is 
\begin{equation}
\sum_{i}\mathbf{\hat{M}}_{i}\boldsymbol{\hat{\rho}}\mathbf{\hat{M}}_{i}^{\dagger}
\end{equation}
where the set $\big\{\mathbf{\hat{M}}_{i}^{\dagger}\mathbf{\hat{M}}_{i}^{\dagger}\big\}_{i}$ is a POVM.

\end{definition}

In \cite{JKone} it is argued that SBS satisfies the desired definition of objectivity and that it is the only structure that satisfies it. The argument for why the observer monitoring $E^{l}$ may find out the state of $S$ is the following. Let us analyze the local state pertaining to $E^{l}$; to do this we partially trace out the degrees of freedom pertaining to the system $S$ and all of the environments $E^{k}$ with the exception of the $l$th environment. i.e. from (\ref{eqn:morocco}) we obtain  
\begin{equation}
\label{eqn:themixtureparty}
\sum_{i}p_{i}\Bigg(\prod_{k\neq l}Tr_{E^{k}}\big\{\boldsymbol{\hat{\rho}}_{i}^{E^{k}}  \big\}\Bigg)\langle i|i\rangle\boldsymbol{\hat{\rho}}_{i}^{E^{l}} = \sum_{i}p_{i}\boldsymbol{\hat{\rho}}_{i}^{E^{l}}.
\end{equation}
Notice that this is a mixed state. If $F\big(\boldsymbol{\hat{\rho}}_{i}^{E^{l}}, \boldsymbol{\hat{\rho}}_{j}^{E^{l}}   \big) = 0$ for all $i\neq j$ then the associated quantum state discrimination problem may be fully solved. This means that there exists a POVM which the observer monitoring the environment $E^{l}$ may utilize to conduct measurements on $E^{l}$ yielding perfect distinguishability between the possible outcomes of the mixture (\ref{eqn:themixtureparty}). Furthermore, the state $\boldsymbol{\hat{\rho}}_{i}^{E^{l}}$ is correlated with the state  $\big| i\big \rangle \big\langle i\big|$ of $S$ in the sense that when $S$ is found to be in the state $\big|i\big\rangle \big\langle i \big|$ the $l$th environment will be found in the state $\boldsymbol{\hat{\rho}}_{i}^{E^{l}}$. Owing to the perfect distinguishability between the states $\boldsymbol{\hat{\rho}}_{i}^{E^{l}}$ for all $i$, there is no ambiguity regarding the state of $S$ given that $E^{l}$ is found to be in the state $\boldsymbol{\hat{\rho}}_{i}^{E^{l}}$.\\ 

To argue non-disturbance, we first re-emphasize that the "can find out" in Definition \ref{def:objectivity} formally means that for every $E^{k}$ there exists a POVM $\{\mathbf{\hat{E}}^{E^{k}}_{i}\}_{i}$ that solves the respective local QSD problem, i.e. that discriminates perfectly the mixture (\ref{eqn:themixtureparty}).   
$\{\bigotimes_{k=1}^{N_{E}}\mathbf{\hat{E}}^{E^{k}}_{i_{k}}\}_{i_{1},i_{2},...,i_{N_{E}}}$ will henceforth denote a POVM acting on $\mathcal{S}\big(\mathscr{H}_{S}\bigotimes_{k=1}^{N_{E}}\mathscr{H}_{E^{k}}\big)$. Suppose the POVM optimally solving the local QSD problem for each environment $E^{l}$ does it in a non-perturbing way, i.e. the state after the associated measurement doesn't change. Then the measurement associated with the POVM may be shown to also be non-disturbing $\{\bigotimes_{k=1}^{N_{E}}\mathbf{\hat{E}}^{E^{k}}_{i_{k}}\}_{i_{1},i_{2},...,i_{N_{E}}}$, i.e. 
\begin{equation}
\label{eqn:Canada2}
\sum_{i}p_{i}|i\rangle\langle i|\otimes \bigotimes_{k= 1}^{N_{E}}\boldsymbol{\hat{\rho}}_{i}^{E^{k}} =  \sum_{i_{1}}\sum_{i_{2}}...\sum_{i_{N_{E}}}\Bigg( \sum_{i}p_{i}|i\rangle\langle i|\otimes \bigotimes_{k= 1}^{N_{E}}\mathbf{\hat{M}}^{E^{k}}_{i_{k}}\boldsymbol{\hat{\rho}}_{i}^{E^{k}}\Big(\mathbf{\hat{M}}^{E^{k}}_{i_{k}}\Big)^{\dagger}\Bigg) 
\end{equation}
where $\Big(\mathbf{\hat{M}}^{E^{k}}_{i_{k}}\Big)^{\dagger}\mathbf{\hat{M}}^{E^{k}}_{i_{k}}=\mathbf{\hat{E}}^{E^{k}}_{i_{k}}$ is the POVM perfectly discriminating the mixture $\sum_{i}p_{i}\boldsymbol{\hat{\rho}}_{i}^{E^{k}}$. To show (\ref{eqn:Canada2}) note that perfect distinguishability  of the $\boldsymbol{\hat{\rho}}_{i}^{E^{k}}$ for all $k$ implies that we may devise a POVM $\{\mathbf{\hat{E}}^{E^{k}}_{i_{k}}\}_{i_{k}}$ such that  
\begin{equation}
\label{eqn:Canada1}
\mathbf{\hat{M}}^{E^{k}}_{i_{k}}\boldsymbol{\hat{\rho}}_{i}^{E^{k}}\Big(\mathbf{\hat{M}}^{E^{k}}_{i_{k}}\Big)^{\dagger} = \delta_{i_{k}i}.
\end{equation}
owing to the non-overlapping of the supports of the $\boldsymbol{\hat{\rho}}_{i}^{E^{k}}$.
With (\ref{eqn:Canada1}) in mind, we may estimate the trace distance in (\ref{eqn:Canada2}). 
\begin{equation}
\label{eqn:Canada5}
\frac{1}{2}\bigg\| \sum_{i}p_{i}|i\rangle\langle i|\otimes \bigotimes_{k= 1}^{N_{E}}\boldsymbol{\hat{\rho}}_{i}^{E^{k}} - \sum_{i_{1}}\sum_{i_{2}}...\sum_{i_{N_{E}}}\Bigg( \sum_{i}p_{i}|i\rangle\langle i|\otimes \bigotimes_{k= 1}^{N_{E}}\mathbf{\hat{M}}^{E^{k}}_{i_{k}}\boldsymbol{\hat{\rho}}_{i}^{E^{k}}\Big(\mathbf{\hat{M}}^{E^{k}}_{i_{k}}\Big)^{\dagger}\Bigg) \bigg\|_{1} =
\end{equation}
\begin{equation}
\frac{1}{2}\bigg\| \sum_{i}p_{i}|i\rangle\langle i|\otimes \bigotimes_{k= 1}^{N_{E}}\boldsymbol{\hat{\rho}}_{i}^{E^{k}} -  \sum_{i}p_{i}|i\rangle\langle i|\otimes \bigotimes_{k= 1}^{N_{E}}\mathbf{\hat{M}}^{E^{k}}_{i}\boldsymbol{\hat{\rho}}_{i}^{E^{k}}\Big(\mathbf{\hat{M}}^{E^{k}}_{i}\Big)^{\dagger}\bigg\|_{1} \leq
\end{equation}
\begin{equation}
\label{eqn:Canada3}
\frac{1}{2}\sum_{i}p_{i}\bigg\|  \bigotimes_{k= 1}^{N_{E}}\boldsymbol{\hat{\rho}}_{i}^{E^{k}} -  \bigotimes_{k= 1}^{N_{E}}\mathbf{\hat{M}}^{E^{k}}_{i}\boldsymbol{\hat{\rho}}_{i}^{E^{k}}\Big(\mathbf{\hat{M}}^{E^{k}}_{i}\Big)^{\dagger}\bigg\|_{1}
\end{equation}
To proceed we introduce the following lemma and we also take this opportunity to state two results that shall be used in the following.
\begin{definition2}{Telescopic inequality \cite{JKthree}:}
Let $\mathbf{\hat{A}}^{k}$ and $\mathbf{\hat{B}}^{k}$ be trace class operators for all $k$. Then,
\label{eqn:telescoping}
\begin{equation}
   \bigg\| \bigotimes_{k=1}^{N}\mathbf{\hat{A}}^{k}-\bigotimes_{k=1}^{N}\mathbf{\hat{B}}^{k}  \bigg\|_{1} \leq
\end{equation}
\begin{equation}
\sum_{j=1}^{N}\bigg(\prod_{k=1}^{j-1}\big\|\mathbf{\hat{A}}^{k}\big\|_{1}\bigg)\times \big\|\mathbf{\hat{A}}^{j}-\mathbf{\hat{B}}^{j}\big\|_{1}\times \bigg(\prod_{k=j+1}^{N}\big\|\mathbf{\hat{B}}^{k} 
 \big\|_{1}\bigg)
\end{equation}
\end{definition2}
In the following we use the probability error $p_{E}\big\{\{p_{i},\boldsymbol{\hat{\rho}}_{i}\}_{i=1}^{N}, \{\mathbf{\hat{M}}_{l}\big\}_{l=1}^{K} \big\}$ defined in Appendix \ref{app:QSD}.
\begin{definition4}{Montanaro Bound \cite{Montanaro}:}
\label{eqn:monta}
\begin{equation}
\min_{POVM}p_{E}\big\{\{p_{i},\boldsymbol{\hat{\rho}}_{i}\}_{i=1}^{N}, \{\mathbf{\hat{M}}_{l}\big\}_{l=1}^{K} \big\}\geq \frac{1}{2}\sum_{i}\sum_{j;j\neq i}p_{i}p_{j}F(\boldsymbol{\hat{\rho}}_{i},\boldsymbol{\hat{\rho}}_{j})
\end{equation}
\end{definition4}

\begin{definition4}{Knill and Barnum \cite{knill}}
\label{eqn:knill1}
\begin{equation}
\label{eqn:knill}
  \min_{POVM}p_{E}\big\{\{p_{i},\boldsymbol{\hat{\rho}}_{i}\}_{i=1}^{N}, \{\mathbf{\hat{M}}_{l}\big\}_{l=1}^{K} \big\} \leq \sum_{i}\sum_{j;j\neq i}\sqrt{p_{i}p_{j}}\sqrt{F\big(\boldsymbol{\hat{\rho}}_{i} , \boldsymbol{\hat{\rho}}_{j}\big)}  
\end{equation} 
\end{definition4}

Using Lemma \ref{eqn:telescoping}, (\ref{eqn:Canada3}) may be bounded as follows. 
\begin{equation}
\label{eqn:Canada4}
\frac{1}{2}\sum_{i}p_{i}\bigg\|  \bigotimes_{k= 1}^{N_{E}}\boldsymbol{\hat{\rho}}_{i}^{E^{k}} -  \bigotimes_{k= 1}^{N_{E}}\mathbf{\hat{M}}^{E^{k}}_{i}\boldsymbol{\hat{\rho}}_{i}^{E^{k}}\Big(\mathbf{\hat{M}}^{E^{k}}_{i}\Big)^{\dagger}\bigg\|_{1} \leq \frac{1}{2}\sum_{k=1}^{N_{E}}\sum_{i}p_{i}\bigg\|  \boldsymbol{\hat{\rho}}_{i}^{E^{k}} -  \mathbf{\hat{M}}^{E^{k}}_{i}\boldsymbol{\hat{\rho}}_{i}^{E^{k}}\Big(\mathbf{\hat{M}}^{E^{k}}_{i}\Big)^{\dagger}\bigg\|_{1} 
\end{equation}

 \section{Estimates for the Related Quantum State Discrimination Optimization Problem}
\;\;\; We claim that the distinguishability criterion $F\big(\boldsymbol{\hat{\rho}}_{i}^{E^{k}},\boldsymbol{\hat{\rho}}_{j}^{E^{k}} \big) =0$ ( $i\neq j$) for all $k$ is a sufficient condition for (\ref{eqn:Canada4}) to vanish. The sufficiency is immediate since each $\mathbf{\hat{E}}^{E^{k}}_{i}$ may be chosen to be a projector onto the domain of the respective $\boldsymbol{\hat{\rho}}_{i}^{E^{k}}$, meaning that $\mathbf{\hat{M}}^{E^{k}}_{i}\boldsymbol{\hat{\rho}}_{i}^{E^{k}}\Big(\mathbf{\hat{M}}^{E^{k}}_{i}\Big)^{\dagger} = \boldsymbol{\hat{\rho}}_{i}^{E^{k}}$ which in turn implies that 
\begin{equation}
\big\|  \boldsymbol{\hat{\rho}}_{i}^{E^{k}} -  \mathbf{\hat{M}}^{E^{k}}_{i}\boldsymbol{\hat{\rho}}_{i}^{E^{k}}\Big(\mathbf{\hat{M}}^{E^{k}}_{i}\Big)^{\dagger}\big\|_{1} = 0 
\end{equation}
The argument becomes more transparent when all of the $\boldsymbol{\hat{\rho}}_{i}^{E^{k}}$ are projectors. In this case we simply choose $\mathbf{\hat{E}}^{E^{k}}_{i}= \boldsymbol{\hat{\rho}}_{i}^{E^{k}}$.\\

The distinguishability condition $F\big(\boldsymbol{\hat{\rho}}_{i}^{E^{k}},\boldsymbol{\hat{\rho}}_{j}^{E^{k}} \big) =0$ ($i\neq j$) for all $k$ is of course an idealization; in practice there may be some error involved in the distinguishability measures $F\big(\boldsymbol{\hat{\rho}}_{i}^{E^{k}},\boldsymbol{\hat{\rho}}_{j}^{E^{k}} \big) = \varepsilon_{k}$ for all $k$. We then have to compute/estimate the sum over $i$ of trace norms in (\ref{eqn:Canada4}). Although Theorem \ref{eqn:knill1} implies
\begin{equation}
\label{eqn:morocco2}
\min_{POVM}\sum_{i}p_{i}Tr\big\{ \boldsymbol{\hat{\rho}}_{i}^{E^{k}} - \mathbf{\hat{M}}^{E^{k}}_{i}\boldsymbol{\hat{\rho}}_{i}^{E^{k}}\Big(\mathbf{\hat{M}}^{E^{k}}_{i}\Big)^{\dagger}\big\} \leq \sum_{i}\sum_{j;j\neq i}\sqrt{p_{i}p_{j}}\sqrt{F(\boldsymbol{\hat{\rho}}_{i}^{E^{k}},\boldsymbol{\hat{\rho}}_{i}^{E^{k}})}
\end{equation}
for all $k$, this does not aid us in minimize
\begin{equation}
\sum_{i}p_{i}\big\|  \boldsymbol{\hat{\rho}}_{i}^{E^{k}} - \mathbf{\hat{M}}^{E^{k}}_{i}\boldsymbol{\hat{\rho}}_{i}^{E^{k}}\Big(\mathbf{\hat{M}}^{E^{k}}_{i}\Big)^{\dagger}\big\|_{1}
\end{equation}
It is clear by (\ref{eqn:morocco2})  that when the fidelities $F\big(\boldsymbol{\hat{\rho}}_{i}^{E^{k}},\boldsymbol{\hat{\rho}}_{j}^{E^{k}} \big)$ ($i\neq j$) are small, then for all $k$ the local QSD error 
\begin{equation}
\min_{POVM}Tr\{ \boldsymbol{\hat{\rho}}_{i}^{E^{k}} - \mathbf{\hat{M}}^{E^{k}}_{i}\boldsymbol{\hat{\rho}}_{i}^{E^{k}}\Big(\mathbf{\hat{M}}^{E^{k}}_{i}\Big)^{\dagger}\}
\end{equation}
is also small. To show that a similar argument holds for the right-hand side of the inequality $(\ref{eqn:Canada4})$ we will prove in the following section a bound for (\ref{eqn:Canada3}) that will depend only on fidelities between the density operators $\{\boldsymbol{\hat{\rho}}_{i}^{E^{k}}\}_{i}$, and vanish as the fidelities $F\big(\boldsymbol{\hat{\rho}}_{i}^{E^{k}},\boldsymbol{\hat{\rho}}_{j}^{E^{k}} \big)$ ($i\neq j$)
 decay to zero for all $k$. We will first show this for the case where the $\boldsymbol{\hat{\rho}}_{i}^{E^{k}}$ are pure states and then generalize our results to the case where these operators are finite mixtures.\\
 
In the present section, we simplify our notational conventions, since we do not need the superscripts on the density operators used in the previous section. Consider the mixed state $\sum_{i=1}^{N}p_{i}\boldsymbol{\hat{\rho}}_{i}$, where $\sum_{i=1}^N p_i = 1$ and the $\boldsymbol{\hat{\rho}}_i$ are pure states in a Hilbert space of dimension greater than $N$, i.e. one-dimensional projections $\big|\psi_{i}\big\rangle\big\langle \psi_{i}\big|$, where $\big\{\big|\psi_{i}\big\rangle\big\}_{i=1}^{N}$ are normalized vectors. Assuming that $ \big|\psi_{i}\big\rangle$ are linearly independent, we may use the well-known Gram-Schmidt procedure to define the associated orthonormal set. 

\begin{definition}{Gram-Schmidt Procedure:}
\label{eqn:gramschmidtpro}
Assume that the set $\big\{\big|\psi\big\rangle_{i}\big\}_{i=1}^{N}$, of vectors in some vector space $V$, is a linearly independent set. Then the following construction yields an orthonormal set. 
\begin{equation}
\big|\phi_{1}\big\rangle = \big|\psi_{1}\big\rangle
\end{equation}
\begin{equation}
\big|\phi_{2}\big\rangle = \frac{1}{\alpha_{2}}\bigg(\big|\psi_{2}\big\rangle - \big\langle \phi_{1}\big|\psi_{2}\big\rangle\big|\phi_{1}\big\rangle\bigg)
\end{equation}
$$\vdots$$
\begin{equation}
\big|\phi_{N}\big\rangle = \frac{1}{\alpha_{N}}\bigg(\big|\psi_{N}\big\rangle - \sum_{k=1}^{N-1}\big\langle \phi_{k}\big|\psi_{N}\big\rangle\big|\phi_{k}\big\rangle \bigg)
\end{equation}
Here $\alpha_{i}: = \Big\|\big|\psi_{i}\big\rangle - \sum_{k=1}^{i-1}\Big\langle \phi_{k}\big|\psi_{i}\big\rangle\big|\phi_{k}\big\rangle\big\| = \sqrt{1-\sum_{k=1}^{i-1}\big|\big\langle\phi_{k}\big|\psi_{i}\big\rangle\big|^{2}}$ for $i>1$ and $\alpha_{1} = 1$ are the respective normalizing constants. 
We have $Span\Big\{\big\{\big|\psi_{i}\big\rangle\big\}_{i=1}^{N}\Big\} = Span\Big\{\big\{\big|\phi_{i}\big\rangle\big\}_{i=1}^{N}\Big\}$. 
\end{definition}
The orthonormal set $\big\{ \big|\phi_{i}\big\rangle\big\}_{i=1}^{N}$ may be used for the construction of a $PVM$, namely
\begin{equation}
\label{eqn:2.122}
\Big\{\big|\phi_{i}\big\rangle \big\langle \phi_{i}\big|\Big\}_{i=1}^{N}\cup\bigg\{\mathbb{I}-\sum_{i=1}^{N}\big|\phi_{i}\big\rangle\big\langle \phi_{i}\big|\bigg\}
\end{equation}
which we will use to estimate $\min_{POVM}\sum_{i=1}^{N}p_{i}\big\|\boldsymbol{\hat{\rho}}_{i} - \mathbf{\hat{M}}_{i}\boldsymbol{\hat{\rho}}_{i}\mathbf{\hat{M}}^{\dagger}_{i}\big\|_{1}$: 
\begin{equation}
\min_{POVM}\sum_{i=1}^{N}p_{i}Tr\Big\{\boldsymbol{\hat{\rho}}_{i} - \mathbf{\hat{M}}_{i}\boldsymbol{\hat{\rho}}_{i}\mathbf{\hat{M}}^{\dagger}_{i}\Big\}\leq\min_{POVM}\sum_{i=1}^{N}p_{i}\big\|\boldsymbol{\hat{\rho}}_{i} - \mathbf{\hat{M}}_{i}\boldsymbol{\hat{\rho}}_{i}\mathbf{\hat{M}}^{\dagger}_{i}\big\|_{1}\leq
\end{equation}
\begin{equation}
\sum_{i=1}^{N}p_{i}\Big\| \boldsymbol{\hat{\rho}}_{i}- \big|\phi_{i}\big\rangle\big\langle \phi_{i}\big| \boldsymbol{\hat{\rho}}_{i}\big|\phi_{i}\big\rangle\big\langle \phi_{i}\big|\Big\|_{1}
\end{equation}
for a judiciously chosen PVM $\big\{\big|\phi_{i}\big\rangle\big\langle \phi_{i}\big|\big\}_{i}$.

\begin{definition2}
\label{eqn:lemmaboundwithgramschmidt}Let $\boldsymbol{\hat{\rho}}_{i}$ and $\big|\phi_{i}\big\rangle$ be defined as above; also let $i>1$, then
\begin{equation}
\big\| \boldsymbol{\hat{\rho}}_{i}- |\phi_{i}\rangle\langle \phi_{i}| \boldsymbol{\hat{\rho}}_{i}|\phi_{i}\rangle\langle \phi_{i}\big\|_{1}\leq 2\sum_{k=1}^{i-1}|\langle \phi_{k}|\psi_{i}\rangle|
\end{equation}
\end{definition2}
\begin{proof}
\begin{equation}
\Big\| \boldsymbol{\hat{\rho}}_{i}- \big|\phi_{i}\big\rangle\big\langle \phi_{i}\big| \boldsymbol{\hat{\rho}}_{i}\big|\phi_{i}\big\rangle\big\langle \phi_{i}\big|\Big\|_{1} = \Big\| \big|\psi_{i}\big\rangle\big\langle\psi_{i}\big|\boldsymbol{\hat{\rho}}_{i}\big|\psi_{i}\big\rangle\big\langle\psi_{i}\big|- \big|\phi_{i}\big\rangle\big\langle \phi_{i}\big| \boldsymbol{\hat{\rho}}_{i}\big|\phi_{i}\big\rangle\big\langle \phi_{i}\big|\Big\|_{1}  =
\end{equation}
\begin{equation}
\Big\| \Big(\big|\psi_{i}\big\rangle\big\langle \psi_{i}\big| - \big|\phi_{i}\big\rangle\big\langle \phi_{i}\big|\Big)\boldsymbol{\hat{\rho}}_{i}\big|\psi_{i}\big\rangle\big\langle \psi_{i}\big|+  \big|\phi_{i}\big\rangle\big\langle \phi_{i}\big|\boldsymbol{\hat{\rho}}_{i}\Big(\big|\psi_{i}\big\rangle\big\langle \psi_{i}\big| - \big|\phi_{i}\big\rangle\big\langle \phi_{i}\big|\Big)\Big\|_{1} \leq
\end{equation}

\begin{equation}
\Big\| \Big(\big|\psi_{i}\big\rangle\big\langle \psi_{i}\big| - \big|\phi_{i}\big\rangle\big\langle \phi_{i}\big|\Big)\boldsymbol{\hat{\rho}}_{i}\big|\psi_{i}\big\rangle\big\langle \psi_{i}\big|\Big\|_{1}+  \Big|\big|\phi_{i}\big\rangle\big\langle \phi_{i}\big|\boldsymbol{\hat{\rho}}_{i}\Big(\big|\psi_{i}\big\rangle\big\langle \psi_{i}\big| - \big|\phi_{i}\big\rangle\big\langle \phi_{i}\big|\Big)\Big\|_{1} \leq
\end{equation}
\begin{equation}
\Big\| \big|\psi_{i}\big\rangle\big\langle \psi_{i}\big| - \big|\phi_{i}\big\rangle\big\langle \phi_{i}\big|\Big\|_{1}\Big\|\boldsymbol{\hat{\rho}}_{i}\big|\psi_{i}\big\rangle\big\langle \psi_{i}\big|\Big\|_{1} + \Big\|\big|\phi_{i}\big\rangle\big\langle \phi_{i}\big|\boldsymbol{\hat{\rho}}_{i}\Big\|_{1}\Big\|\big|\psi_{i}\big\rangle\big\langle \psi_{i}\big| - \big|\phi_{i}\big\rangle\big\langle \phi_{i}\big|\Big\|_{1} =
\end{equation}
\begin{equation}
\Big\| \big|\psi_{i}\big\rangle\big\langle \psi_{i}\big| - \big|\phi_{i}\big\rangle\big\langle \phi_{i}\big|\Big\|_{1}\Bigg(\Big\|\boldsymbol{\hat{\rho}}_{i}\big|\psi_{i}\big\rangle\big\langle \psi_{i}\big|\Big\|_{1} + \Big\|\big|\phi_{i}\big\rangle\big\langle \phi_{i}\big|\boldsymbol{\hat{\rho}}_{i}\Big\|_{1}\Bigg)\leq
\end{equation}
\begin{equation}
\Big\| \big|\psi_{i}\big\rangle\big\langle \psi_{i}\big| - \big|\phi_{i}\big\rangle\big\langle \phi_{i}\big|\Big\|_{1}\Bigg(\big\|\boldsymbol{\hat{\rho}}_{i}\big\|_{1}\Big\|\big|\psi_{i}\big\rangle\big\langle \psi_{i}\big|\Big\|_{1} + \Big\|\big|\phi_{i}\big\rangle\big\langle \phi_{i}\big|\Big\|_{1}\big\|\boldsymbol{\hat{\rho}}_{i}\big\|_{1}\Bigg)\leq
\end{equation}
\begin{equation}
2\Big\| \big|\psi_{i}\big\rangle\big\langle \psi_{i}\big| - \big|\phi_{i}\big\rangle\big\langle \phi_{i}\big|\Big\|_{1} = 2\sqrt{1-|\big\langle \psi_{i}\big|\phi_{i}\big\rangle|^{2}} =
\end{equation}
\begin{equation}
 2\sqrt{1 - \Bigg|\frac{1}{\alpha_{i}}\bigg( 1-\sum_{k=1}^{i-1}|\big\langle \phi_{k}\big|\psi_{i}\big\rangle|^{2}\bigg)\Bigg|^{2}} =  2\sqrt{1 - \Bigg|\frac{\Big(1-\sum_{k=1}^{i-1}|\big\langle \phi_{k}\big|\psi_{i}\big\rangle|^{2}\Big)}{\sqrt{\Big(1-\sum_{k=1}^{i-1}|\big\langle \phi_{k}\big|\psi_{i}\big\rangle|^{2}\Big)}}\Bigg|^{2}}
\end{equation}
\begin{equation}
 = 2\sqrt{1 - 1+\sum_{k=1}^{i-1}|\big\langle \phi_{k}\big|\psi_{i}\big\rangle|^{2}} = 2\sqrt{\sum_{k=1}^{i-1}|\big\langle \phi_{k}\big|\psi_{i}\big\rangle|^{2}} \leq 2\sum_{k=1}^{i-1}|\big\langle \phi_{k}\big|\psi_{i}\big\rangle|
\end{equation}
\end{proof}
The term $\sum_{k=1}^{i-1}\big|\big\langle \phi_{k}\big|\psi_{i}\big\rangle|$ may be understood in terms of the related \emph{Gram Determinant}. We present this as a lemma. 
\begin{definition2}{(See \cite{gogo})}
\label{eqn:lemmagramschmidtmatrix}
\begin{equation}
\big|\phi_{j}\big\rangle  = \frac{1}{\sqrt{D_{j-1}D_{j}}} \mathbf{det}
\begin{pmatrix}
\big\langle\psi_{1}\big|\psi_{1}\big\rangle  & \big\langle \psi_{1}\big|\psi_{2}\big\rangle & \dots & \big\langle\psi_{1}\big|\psi_{j}\big\rangle\\

\big\langle \psi_{2}\big|\psi_{1}\big\rangle & \big\langle \psi_{2}\big|\psi_{2}\big\rangle& \dots & \big\langle \psi_{2}\big|\psi_{j}\big\rangle\\

\vdots & \vdots & \ddots & \vdots\\

\big\langle \psi_{j-1}\big|\psi_{1}\big\rangle & \big\langle \psi_{j-1}\big|\psi_{2}\big\rangle & \dots & \big\langle \psi_{j-1}\big|\psi_{j}\big\rangle\\

\big|\psi_{1}\big\rangle & \big|\psi_{2}\big\rangle & \dots &\big|\psi_{j}\big\rangle\\
\end{pmatrix}
\end{equation}
where 
\begin{equation}
\hspace*{1.85cm}D_{j} : = \mathbf{det}\begin{pmatrix}
\langle \psi_{1}|\psi_{1}\rangle  & \langle \psi_{1}|\psi_{2}\rangle & \dots & \langle\psi_{1}|\psi_{j}\rangle\\

\langle \psi_{2}|\psi_{1}\rangle & \langle \psi_{2}|\psi_{2}\rangle& \dots & \langle \psi_{2}|\psi_{j}\rangle\\

\vdots & \vdots & \ddots & \vdots\\

\langle \psi_{j}|\psi_{1}\rangle & \langle \psi_{j}|\psi_{2}\rangle & \dots &\langle \psi_{j}|\psi_{j}\rangle\\
\end{pmatrix}
\end{equation}
where $D_{0}:= 1$.
\end{definition2}
In determinant form, $\big\langle \psi_{i}\big|\phi_{k}\big\rangle$ may now be written as follows. 
\begin{center}
\begin{equation}
\label{eqn:thematrixgram}
\big\langle \psi_{i}\big|\phi_{k}\big\rangle = 
\frac{1}{\sqrt{D_{k-1}D_{k}}}\mathbf{det}\begin{pmatrix}
\big\langle \psi_{1}\big|\psi_{1}\big\rangle  & \big\langle \psi_{1}\big|\psi_{2}\big\rangle & \dots & \big\langle\psi_{1}\big|\psi_{k}\big\rangle\\

\big\langle \psi_{2}\big|\psi_{1}\big\rangle & \big\langle \psi_{2}\big|\psi_{2}\big\rangle& \dots & \big\langle \psi_{2}\big|\psi_{k}\big\rangle\\

\vdots & \vdots & \ddots & \vdots\\

\big\langle \psi_{k-1}\big|\psi_{1}\big\rangle & \big\langle \psi_{k-1}\big|\psi_{2}\big\rangle & \dots & \big\langle \psi_{k-1}\big|\psi_{k}\big\rangle\\

\big\langle \psi_{i}\big|\psi_{1}\big\rangle & \big\langle \psi_{i}\big|\psi_{2}\big\rangle & \dots &\big\langle \psi_{i}\big|\psi_{k}\big\rangle\\
\end{pmatrix}
\end{equation}
\end{center}
The power of viewing the states $\big|\phi_{i}\big\rangle$ in the determinant form is that now we need only compute inner products between elements of the set $\big\{\big|\psi_{i}\big\rangle\big\}_{i=1}^{N}$ in order to estimate the solution of the $PVM$ (\ref{eqn:2.122}) via an approximate solution to (\ref{eqn:Canada3}) problem with a particular PVM, i.e. 
\begin{equation}
\min_{PVM} \sum_{i=1}^{N}p_{i}\big\|\boldsymbol{\hat{\rho}}_{i} - \mathbf{\hat{P}}_{i}\boldsymbol{\hat{\rho}}_{i}\mathbf{\hat{P}}_{i}\big\|_{1}
\end{equation}
Recall that the states $\big\{\big|\psi_{i}\big\rangle\big\}_{i=1}^{N}$ are normalized and let us consider the case where $\big\langle \psi_{i}\big|\psi_{j}\big\rangle = \varepsilon_{ij}$ for all $i \neq j \in \{1,..., N\}$, where $\varepsilon_{ij}$ are complex numbers satisfying $|\varepsilon_{ij}| \leq \delta$ for all $i\neq j \in \{1,..,N\}$, where $\delta$ is small. Since, under this assumption, all entries of the last column of the matrix (\ref{eqn:thematrixgram}) are small, this would also imply that $ \Big\| \boldsymbol{\hat{\rho}}_{i}- |\phi_{i}\rangle\langle \phi_{i}| \boldsymbol{\hat{\rho}}_{i}|\phi_{i}\rangle\langle \phi_{i}\Big\|_{1}$ is small for all $i$, thanks to Lemma \ref{eqn:lemmaboundwithgramschmidt}.\\

The above estimates imply the following theorem. 
\begin{definition4}
\label{eqn:theoremgram}
Consider a mixed state of the form $\sum_{i=1}^{N}p_{i}\boldsymbol{\hat{\rho}}_{i}$,\:$\sum_{i=1}^{N}p_{i}=1$, where $\boldsymbol{\hat{\rho}}_{i}:=\big|\psi_{i}\big\rangle\big\langle \psi_{i}\big|$ are pure states acting on a \emph{Hilbert} space of dimension greater than $N$. Furthermore, assume that the states $\{\big|\psi_{i}\big\rangle\}_{i}$ are linearly independent. Then
\begin{equation}
\label{eqn:morocco3}
\min_{POVM} \sum_{i=1}^{N}p_{i}\bigg\|\boldsymbol{\hat{\rho}}_{i} -\mathbf{\hat{M}}_{i}\boldsymbol{\hat{\rho}}_{i}\mathbf{\hat{M}}^{\dagger}_{i}\bigg\|_{1}\leq \sum_{i=2}^{N}p_{i}\sum_{k=1}^{i-1}\Big|\frac{M_{k,i}}{D_{k-1}D_{k}}\bigg|
\end{equation}
where 
\begin{equation}
\label{eqn:theend7}
 M_{k,i}:= 2\mathbf{det}\begin{pmatrix}
\langle \psi_{1}|\psi_{1}\rangle  & \langle \psi_{1}|\psi_{2}\rangle & \dots & \langle\psi_{1}|\psi_{k}\rangle\\

\langle \psi_{2}|\psi_{1}\rangle & \langle \psi_{2}|\psi_{2}\rangle& \dots & \langle \psi_{2}|\psi_{k}\rangle\\

\vdots & \vdots & \ddots & \vdots\\

\langle \psi_{k-1}|\psi_{1}\rangle & \langle \psi_{k-1}|\psi_{2}\rangle& \dots & \langle \psi_{k-1}|\psi_{k}\rangle\\

\langle \psi_{i}|\psi_{1}\rangle & \langle \psi_{i}|\psi_{2}\rangle & \dots &\langle \psi_{i}|\psi_{k}\rangle\\
\end{pmatrix} 
\end{equation}
\begin{equation}
\label{eqn:morocco4}
\hspace*{0.8cm} D_{k} : = \mathbf{det}\begin{pmatrix}
\langle \psi_{1}|\psi_{1}\rangle  & \langle \psi_{1}|\psi_{2}\rangle & \dots & \langle\psi_{1}|\psi_{k}\rangle\\

\langle \psi_{2}|\psi_{1}\rangle & \langle \psi_{2}|\psi_{2}\rangle& \dots & \langle \psi_{2}|\psi_{k}\rangle\\

\vdots & \vdots & \ddots & \vdots\\

\langle \psi_{k}|\psi_{1}\rangle & \langle \psi_{k}|\psi_{2}\rangle & \dots &\langle \psi_{k}|\psi_{k}\rangle\\
\end{pmatrix}
\end{equation}
\end{definition4}

\begin{proof}
The proof follows directly from Lemma \ref{eqn:lemmagramschmidtmatrix} and Lemma \ref{eqn:lemmaboundwithgramschmidt}, 
and the fact that for $i=1$ the corresponding projector is simply $\big|\psi_{i}\big\rangle\big\langle \psi_{i}\big|$ making the $i = 1$ term zero. 
\end{proof}
\section{Dynamical Monitoring for discrete variables}\label{sec:monitoring}
\;\;\; We have been studying the properties of SBS states as introduced in Definition \ref{eqn:turk4}. In the present section, we will focus on the asymptotic behavior of the time-dependent density operator $\boldsymbol{\hat{\rho}}_{t}$ evolving non-unitarily. If we knew \emph{a priori} that a certain type of multipartite quantum-mechanical system behaves objectively as in Definitions \ref{def:objectivity} and \ref{eqn:turk4}, then the states of this system should be SBS-asymptotic as $t\rightarrow \infty$, in the following sense.
\begin{definition}{SBS-asymptotic:}
We say that time-dependent density operator $\boldsymbol{\hat{\rho}}_{t}$ is SBS-asymptotic whenever
\begin{equation}
\lim_{t\rightarrow\infty}\min_{SBS}\big\|\boldsymbol{\hat{\rho}}_{t}-\boldsymbol{\hat{\rho}}_{SBS}\big\|_{1}\rightarrow 0
\end{equation}
where the minimum is over SBS states $\boldsymbol{\hat{\rho}}_{SBS}$.
\end{definition}
Time evolution may in general be described by an arbitrary time-dependent quantum map. We will focus on the quantum maps obtained by partial trace from unitary evolution corresponding to a Hamiltonian specified below.

\subsection{Quantum-Measurement Limit}
\;\;\; The principal models studied in SBS literature \cite{JKone}\cite{JKtwo}\cite{JKthree} are of the \emph{quantum-measurement} limit type, meaning SBS that arise as $t$ converges to infinity from dynamics generated by Hamiltonians in which the interaction term between the system $S$ and the environment $E$ dominates, i.e. $\mathbf{\hat{H}}_{tot} = \mathbf{\hat{H}}_{int}+\mathbf{\hat{H}}_{E}+\mathbf{\hat{H}}_{S} \approx \mathbf{\hat{H}}_{int}$ (here \emph{tot} means total and \emph{int} means interaction terms). Such an approximation is valid when the system and the environments evolve on a time scale that is much larger than that corresponding to the interactive dynamics; these types of dynamics are central to the theory of \emph{quantum decoherence} \cite{schloss2}. In this work, we will furthermore narrow our focus to interaction Hamiltonians of the von Neumann type \cite{zeh}
\begin{equation}
\label{eqn:intham}
\mathbf{\hat{H}}_{int} =  \mathbf{\hat{X}}\otimes\sum_{k=1}^{N}g_{k}\mathbf{\hat{B}}_{k}
\end{equation}
The corresponding time evolution operator is thus
\begin{equation}
\label{eqn:timeev11}
\mathbf{\hat{U}}_{t} = e^{-it\mathbf{\hat{X}}\otimes\sum_{k=1}^{N}g_{k}\mathbf{\hat{B}}_{k}}.
\end{equation}

The theory of SBS for discrete variables focuses on the case where Hilbert space of the system $S$ is finite-dimensional \cite{JKone}\cite{JKtwo}\cite{JKthree}. In this case, the self-adjoint operator $\mathbf{\hat{X}}$ has purely discrete spectrum. Let $\{\big|i\big\rangle\}_{i=1}^{d_{S}}$ be the set of eigenvectors of $\mathbf{\hat{X}}$ with $x_{i}$ being the corresponding eigenvalues. $\mathbf{\hat{B}}_{k}$ is assumed to be an arbitrary self-adjoint operator. We shall see that the spectral properties of the operator $\mathbf{\hat{B}}$ determine whether or not the multipartite states converge to an SBS state. \\

\subsection{Partial Tracing}
\;\;\; We consider a quantum system interacting with $N$ macroscopic environments. We assume that the joint initial state has the product form: 
\begin{equation}
\label{eqn:snake}
\boldsymbol{\hat{\rho}}=\boldsymbol{\hat{\rho}}_{S}\otimes\bigotimes_{k=1}^{N}\boldsymbol{\hat{\rho}}^{E^{k}}
\end{equation}
We evolve our total initial state using the evolution operator (\ref{eqn:timeev11}). 
\begin{equation}
\label{eqn:zelda10}
\boldsymbol{\hat{\rho}}_{t} = \bigg(e^{-it\mathbf{\hat{X}}\otimes\sum_{k=1}^{N}g_{k}\mathbf{\hat{B}}_{k}}\bigg)\boldsymbol{\hat{\rho}}_{S}\otimes\bigotimes_{k=1}^{N}\boldsymbol{\hat{\rho}}^{E^{k}}\bigg(e^{it\mathbf{\hat{X}}\otimes\sum_{k=1}^{N}g_{k}\mathbf{\hat{B}}_{k}}\bigg).
\end{equation}
To study the state of the subsystem formed by the system $S$ and the first $N_{E}$ environments, we take the partial trace of the time-evolved density operator over the remaining $M_{E}:=N-N_{E}$ environments. The result is, 
\begin{equation}
\label{eqn:decmode}
\sum_{i,j=1}^{d_{S}}\sigma_{i,j}\Gamma(i,j,t)|i\rangle\langle j| \otimes\bigotimes_{k=1}^{N_{E}} \boldsymbol{\hat{\rho}}_{x_{i},x_{j}}^{E^{k},t}
\end{equation}
where, again, $\{\big|i\big\rangle\}_{i=1}^{d_{s}}$ are the eigenvectors of $\mathbf{\hat{X}}$, with corresponding eigenvalues $\{x_{i}\}_{i=1}^{d_{S}}$. Here, we have use the following notation
\begin{equation}
\boldsymbol{\hat{\rho}}_{x,y}^{E^{k},t}:= e^{-itxg_{k}\mathbf{\hat{B}}_{k}}\boldsymbol{\hat{\rho}}^{E^{k}}e^{ityg_{k}\mathbf{\hat{B}}_{k}} \: (k  =1,2,..., N_{E}) 
\end{equation}
\begin{equation}
\label{eqn:purepreserve}
\boldsymbol{\hat{\rho}}_{x}^{E^{k},t}:= e^{-itxg_{k}\mathbf{\hat{B}}_{k}}\boldsymbol{\hat{\rho}}^{E^{k}}e^{itxg_{k}\mathbf{\hat{B}}_{k}} \: (k = 1,2,...,N_{E}). 
\end{equation}
\begin{equation}
\sigma_{i,j} := \langle i|\boldsymbol{\hat{\rho}}_{S}|j\rangle 
\end{equation}
\begin{equation}
\gamma_{i,j}^{k}(t):= Tr\big\{ \boldsymbol{\hat{\rho}}_{x_{i},x_{j}}^{E^{k},t} \big\}
\end{equation}
\begin{equation}
\label{eqn:thegamm}
\Gamma(i,j,,t):=\prod_{n=N_{E}+1}^{N}\gamma_{i,j}^{n}(t)
\end{equation}
We may also write (\ref{eqn:decmode}) as 
\begin{equation}
\label{eqn:decmodechan}
\Lambda_{t}\Big(\boldsymbol{\hat{\rho}}_{S}\otimes\bigotimes_{k=1}^{N_{E}}\boldsymbol{\hat{\rho}}^{E^{k}}\Big):=
\sum_{i,j=1}^{d_{S}}\sigma_{i,j}\Gamma(i,j,t)|i\rangle\langle j| \otimes\bigotimes_{k=1}^{N_{E}} \boldsymbol{\hat{\rho}}_{x_{i},x_{j}}^{E^{k},t}
\end{equation}
where $\Lambda$ is a quantum channel (see \cite{Nielsen} for a discussion of quantum channels) defined as follows.
\begin{equation}
\label{eqn:morocco7}
\Lambda_{t}\big(\boldsymbol{\hat{\rho}}\big):=\mathscr{U}_{t}\circ\mathscr{E}_{t}\big(\boldsymbol{\hat{\rho}}\big)
\end{equation}
where 
\begin{equation}
\mathscr{U}_{t}\big(\mathbf{\hat{A}}\big):= e^{-it\mathbf{\hat{X}}\otimes  \sum_{k=1}^{N_{E}}g_{k}\mathbf{\hat{B}}_{k}}\big( \mathbf{\hat{A}}\big)e^{it\mathbf{\hat{X}}\otimes\sum_{k=1}^{N_{E}}g_{k}\mathbf{\hat{B}}_{k}}
\end{equation}
acts non-trivially in $\mathcal{S}\big(\mathscr{H}_{S}\otimes \bigotimes_{k=1}^{N_{E}}\mathscr{H}_{k}\big)$ 
and  $\mathscr{E}_{t}$ acts non-trivially only in $\mathcal{S}\big(\mathscr{H}_{S})$ as follows. 
\begin{equation}
\label{eqn:decfinitecase}
\mathscr{E}_{t}(\mathbf{\hat{C}}):= \sum_{i,j=1}^{d_{S}}\langle i|\mathbf{\hat{C}}|j\rangle\Gamma(i,j,t)|i\rangle\langle j|
\end{equation}
In words, the trace-preserving quantum map $\Lambda_{t}$ is a composition of two trace-preserving quantum maps $\mathscr{U}_{t}$ and $\mathscr{E}_{t}$: a unitary map acting on $S$ and the environmental degrees of freedom that were not traced out, and a non-unitary map acting locally in $S$.

\section{Monitoring the Process Information Broadcasting by the System} 
\;\;\; In \cite{JKthree} the goal was to show that (\ref{eqn:decmode}) is SBS-asymptotic. For a given $t\geq 0$, one can create an SBS state approximating (\ref{eqn:decmode}) in the following way. We first restrict the sum of (\ref{eqn:decmode}) to the diagonal terms---the terms with $i=j$. We will use the index "diag" to indicate this: 
\begin{equation}
\label{eqn:dig}
\boldsymbol{\hat{\rho}}_{diag,t}= \sum_{i=1}^{d_{S}}\sigma_{i}|i\rangle\langle i| \otimes\bigotimes_{k=1}^{N_{E}} \boldsymbol{\hat{\rho}}_{x_{i}}^{E^{k},t} 
\end{equation}
The next step is to choose for every $t$ a PVM acting on the space $\mathcal{S}\big(\mathscr{H}_{S}\otimes\bigotimes_{k=1}^{N_{E}} \mathscr{H}_{E^{k}}\big)$ ( Note that for the case considered in \cite{JKthree}, $dim(\mathscr{H}_{S}) = d_S < \infty$ and $dim(\mathscr{H}_{E^{k}})= d_{E^{k}} < \infty $ for all $k$). To define such a PVM, we use the eigenbasis of the operator $\mathbf{\hat{X}}$: the elements of the $PVM$ are $\big|i\big\rangle \big\langle i\big|\otimes\bigotimes_{k=1}^{N_{E}} \mathbf{\hat{P}}^{E^{k},t}_{j}$ where the $\big\{\big|i\big\rangle\big\langle i\big| \big\}_{i=1}^{d_{S}}$ and $\big\{\mathbf{\hat{P}}^{E^{k},t}_{j}\big\}_{j=1}^{d_{S}}\cup\big\{\mathbb{I}-\sum_{i=1}^{d_{S}}\mathbf{\hat{P}}^{E^{k},t}_{i}\big\}$ resolve the identity operators in $\mathscr{H}_{S}$ and $\mathscr{H}_{E_{k}}$ respectively. These PVMs are then used to approximate the operator (\ref{eqn:decmode}) by an SBS state:
\begin{equation}
\label{eqn:sbsfinally}
\boldsymbol{\hat{\rho}}_{SBS,t} :=  \frac{1}{\mathscr{N}}\sum_{j=1}^{d_{S}}\Bigg(\big|j\big\rangle \big\langle j\big|\otimes\bigotimes_{k=1}^{N_{E}} \boldsymbol{P}^{E^{k},t}_{j}\Bigg)\boldsymbol{\hat{\rho}}_{diag,t}\Bigg(\big|j\big\rangle \big\langle j\big|\otimes\bigotimes_{k=1}^{N_{E}} \mathbf{\hat{P}}^{E^{k},t}_{j} \Bigg)= 
\end{equation}
\begin{equation}
\label{eqn:SBS20canada}
\sum_{i=1}^{d_{S}}\Tilde{\sigma}_{i}|i\rangle\langle i| \otimes\bigotimes_{k=1}^{N_{E}} \Bigg(\mathbf{\hat{P}}_{i}^{E^{k},t}\boldsymbol{\hat{\rho}}_{x_{i}}^{E^{k},t}\mathbf{\hat{P}}_{i}^{E^{k},t}\Bigg).
\end{equation}
Here $\mathscr{N}$ is a normalizing constant and $\Tilde{\sigma}_{i}: = \frac{\sigma_{i}}{\mathscr{N}}$.
One can verify that the operator (\ref{eqn:SBS20canada}) is indeed an SBS state as defined in Definition \ref{eqn:turk4}. If the trace distance between (\ref{eqn:decmode}) and  (\ref{eqn:SBS20canada}) approaches 0 as $t \to\infty$, it is indeed the case that (\ref{eqn:decmode}) is SBS-asymptotic. Namely, if
\begin{equation}
\label{eqn:dist}
\min_{PVM}\big\|  \boldsymbol{\hat{\rho}}_{t} - \boldsymbol{\hat{\rho}}_{SBS,t}   \big\|_{1} \to 0 \:\: \text{as} \:\: t\to \infty 
\end{equation}
where for each $t$ the minimization is taken over all projector-valued-measures \begin{equation}
 \big\{\mathbf{\hat{P}}^{E^{k},t}_{i}\}_{i=1}^{d_{S}}\cup\{\mathbb{I}-\sum_{i=1}^{d_{S}}\mathbf{\hat{P}}^{E^{k},t}_{i}\}
 \end{equation}
then
\begin{equation}
\min_{SBS}\big\|  \boldsymbol{\hat{\rho}}_{t} - \boldsymbol{\hat{\rho}}_{SBS,t}   \big\|_{1} \to 0 \:\: \text{as} \:\: t\to \infty 
\end{equation}
since
\begin{equation}
 \label{eqn:inq127}
 \min_{SBS}\big\|  \boldsymbol{\hat{\rho}}_{t} - \boldsymbol{\hat{\rho}}_{SBS,t}   \big\|_{1}\leq \min_{PVM}\big\|  \boldsymbol{\hat{\rho}}_{t} - \boldsymbol{\hat{\rho}}_{SBS,t}   \big\|_{1}
\end{equation}
An attempt is made in \cite{JKthree} to prove (\ref{eqn:dist}) but the argument provided there is incomplete. In what follows we discuss the bounds presented in \cite{JKthree}, as well as propose and prove an alternative bound for the trace distance in (\ref{eqn:dist}).\\

In \cite{JKthree}, the following bound is conjectured for the trace distance in (\ref{eqn:dist}).
 \begin{equation}
 \label{eqn:theirs2}
\frac{1}{2}\min_{PVM}\big\|  \boldsymbol{\hat{\rho}}_{t} - \boldsymbol{\hat{\rho}}_{SBS,t}  \big\|_{1} \leq \Gamma(t) +\sum_{i}\sum_{j;j\neq i}\sqrt{\sigma_{i}\sigma_{j}}\sum_{k=1}^{N_{E}}F\big(\boldsymbol{\hat{\rho}}_{x_{i}}^{E^{k},t}, \boldsymbol{\hat{\rho}}_{x_{j}}^{E^{k},t}  \big)
\end{equation} 
where now, $\Gamma(t) :=  \sum_{i}\sum_{j;j\neq i}|\sigma_{i,j}| \prod_{k=N_{E}+1}^{N}| \gamma_{i,j}^{k}(t)|$, and again $\gamma_{i,j}^{k}(t)=Tr\big[ \boldsymbol{\hat{\rho}}_{x_{i},x_{j}}^{E^{k},t}  \big]$, $\sigma_{i,j} := \langle i|\boldsymbol{\hat{\rho}}_{S_{0}}|j\rangle$ . This result would allow us to estimate the minimum on the left-hand side of (\ref{eqn:inq127}), using the asymptotic properties of $\Gamma(t)$ and the fidelity factors in (\ref{eqn:theirs2}). This bound would in turn provide a way to estimate $\frac{1}{2}\min_{POVM}\big\|  \boldsymbol{\hat{\rho}}_{t} - \boldsymbol{\hat{\rho}}_{SBS,t}  \big\|_{1}$. As (\ref{eqn:theirs2}) is currently not known to be true, we will not be using it. Instead, we will use Theorem \ref{eqn:theoremgram} proven in the previous section.

\subsection{A New Bound for the Trace Distance of a Multipartite State and an Approximating SBS State}
\;\;\;In what follows we use an unnormalized version of (\ref{eqn:sbsfinally}):   $\boldsymbol{\hat{\rho}}_{UN,t}:= \mathscr{N}\boldsymbol{\hat{\rho}}_{SBS,t}$. In practice, it is easier to bound $\big\| \boldsymbol{\hat{\rho}}_{t} - \boldsymbol{\hat{\rho}}_{PSBS,t}   \big\|_{1}$ and then use Lemma \ref{eqn:reversetin}, stated below, to bound $\big\| \boldsymbol{\hat{\rho}}_{t} - \boldsymbol{\hat{\rho}}_{SBS,t} \big\|_{1}$.
\begin{definition2}
\label{eqn:reversetin}
For density operators $\boldsymbol{\hat{\rho}}$ and $\boldsymbol{\hat{\sigma}}$, and constants $L\geq 0$, $\eta \in [0,1]$, $\big\|\boldsymbol{\hat{\rho}}-\eta\boldsymbol{\hat{\sigma}}\big\|_{1}\leq L$ implies $\big\|\boldsymbol{\hat{\rho}}-\boldsymbol{\hat{\sigma}}\big\|_{1}\leq 2L$ 
\end{definition2}
\begin{proof}
Using reverse triangle inequality we see that 
\begin{equation}
L\geq \|\boldsymbol{\hat{\rho}}-\eta\boldsymbol{\hat{\sigma}}\|_{1}\geq \big|\|\boldsymbol{\hat{\rho}}\|_{1}-\|\eta\boldsymbol{\hat{\sigma}}\|_{1}\big| = \|\boldsymbol{\hat{\rho}}\|_{1}-\|\eta\boldsymbol{\hat{\sigma}}\|_{1} = 1-\eta
\end{equation}
furthermore
\begin{equation}
 \|\boldsymbol{\hat{\rho}}-\boldsymbol{\hat{\sigma}}\|_{1} =   \|\boldsymbol{\hat{\rho}}-\eta\boldsymbol{\hat{\sigma}}+\eta\boldsymbol{\hat{\sigma}}-\boldsymbol{\hat{\sigma}}\|_{1}   \leq   \|\boldsymbol{\hat{\rho}}-\eta\boldsymbol{\hat{\sigma}}\|_{1}+\|\eta\boldsymbol{\hat{\sigma}}-\boldsymbol{\hat{\sigma}}\|_{1} \leq
\end{equation}
\begin{equation}
L +(1-\eta)\|\boldsymbol{\hat{\sigma}}\|_{1} = L+(1-\eta)\leq L+L= 2L
\end{equation}
\end{proof}
We now prove a preliminary inequality. 
\begin{equation}
\label{eqn:start27}
 \big\|  \boldsymbol{\hat{\rho}}_{t} - \boldsymbol{\hat{\rho}}_{UN,t}   \big\|_{1}= 
\end{equation}
\begin{equation}
\bigg\|\sum_{i,j=1}^{d_{S}}\sigma_{i,j}\Gamma(i,j,t)|i\rangle\langle j| \otimes\bigotimes_{k=1}^{N_{E}} \boldsymbol{\hat{\rho}}_{x_{i},x_{j}}^{E^{k},t}- \sum_{i=1}^{d_{S}}\sigma_{i}|i\rangle\langle i| \otimes\bigotimes_{k=1}^{N_{E}} \mathbf{\hat{P}}_{i}^{E^{k},t}\boldsymbol{\hat{\rho}}_{x_{i}}^{E^{k},t}\mathbf{\hat{P}}_{i}^{E^{k},t}\bigg\|_{1}\leq
\end{equation}
\begin{equation}
\bigg\|\sum_{i=1}^{d_{S}}\sigma_{i}|i\rangle\langle i| \otimes\bigotimes_{k=1}^{N_{E}} \boldsymbol{\hat{\rho}}_{x_{i}}^{E^{k},t}-\ \sum_{i=1}^{d_{S}}\sigma_{i}|i\rangle\langle i| \otimes\bigotimes_{k=1}^{N_{E}}\mathbf{\hat{P}}_{i}^{E^{k},t}\boldsymbol{\hat{\rho}}_{x_{i}}^{E^{k},t} \mathbf{\hat{P}}_{i}^{E^{k},t}\bigg\|_{1}+\bigg\|\sum_{i}\sum_{j;j\neq i}^{d_{S}}\sigma_{i,j}\Gamma(i,j,t)|i\rangle\langle j| \otimes\bigotimes_{k=1}^{N_{E}} \boldsymbol{\hat{\rho}}_{x_{i},x_{j}}^{E^{k},t}\bigg\|_{1} \leq
\end{equation}
\begin{equation}
\sum_{i=1}^{d_{S}}\bigg\|\sigma_{i}|i\rangle\langle i| \otimes\bigotimes_{k=1}^{N_{E}} \boldsymbol{\hat{\rho}}_{x_{i}}^{E^{k},t}- \sigma_{i}|i\rangle\langle i| \otimes\bigotimes_{k=1}^{N_{E}} \mathbf{\hat{P}}_{i}^{E^{k},t}\boldsymbol{\hat{\rho}}_{x_{i}}^{E^{k},t} \mathbf{\hat{P}}_{i}^{E^{k},t}\bigg\|_{1}+\bigg\|\sum_{i}\sum_{j; j\neq i}^{d_{S}}\sigma_{i,j}\Gamma(i,j,t)|i\rangle\langle j| \otimes\bigotimes_{k=1}^{N_{E}} \boldsymbol{\hat{\rho}}_{x_{i},x_{j}}^{E^{k},t}\bigg\|_{1}\leq
\end{equation}
\begin{equation}
\sum_{i=1}^{d_{S}}\sigma_{i}\bigg\| |i\rangle\langle i| \otimes\bigg(\bigotimes_{k=1}^{N_{E}}\boldsymbol{\hat{\rho}}_{x_{i}}^{E^{k},t}- \bigotimes_{k=1}^{N_{E}}\mathbf{\hat{P}}_{i}^{E^{k},t}\boldsymbol{\hat{\rho}}_{x_{i}}^{E^{k},t} \mathbf{\hat{P}}_{i}^{E^{k},t}\bigg)\bigg\|_{1}+\sum_{i}\sum_{j;j\neq i}^{d_{S}}\big|\sigma_{i,j}\Gamma(i,j,t)\big|\bigg\||i\rangle\langle j| \otimes\bigotimes_{k=1}^{N_{E}} \boldsymbol{\hat{\rho}}_{x_{i},x_{j}}^{E^{k},t}\bigg\|_{1}=
\end{equation}
\begin{equation}
\sum_{i=1}^{d_{S}}\sigma_{i}\bigg\|\bigotimes_{k=1}^{N_{E}}\boldsymbol{\hat{\rho}}_{x_{i}}^{E^{k},t}-\bigotimes_{k=1}^{N_{E}}\mathbf{\hat{P}}_{i}^{E^{k},t}\boldsymbol{\hat{\rho}}_{x_{i}}^{E^{k},t} \mathbf{\hat{P}}_{i}^{E^{k},t}\bigg\|_{1}+\sum_{i}\sum_{j; j \neq i}^{d_{S}}\big|\sigma_{i,j}\Gamma(i,j,t)\big| \leq
\end{equation}
\begin{equation}
\sum_{k=1}^{N_{E}}\sum_{i=1}^{d_{S}}\sigma_{i}\bigg\|\boldsymbol{\hat{\rho}}_{x_{i}}^{E^{k}_{t}}-\mathbf{\hat{P}}_{i}^{E^{k},t}\boldsymbol{\hat{\rho}}_{x_{i}}^{E^{k},t} \mathbf{\hat{P}}_{i}^{E^{k},t}\bigg\|_{1}+\sum_{i}\sum_{j; j \neq i}^{d_{S}}\big|\sigma_{i,j}\Gamma(i,j,t)\big|   
\end{equation}
where in the last step we have used Lemma \ref{eqn:telescoping}. Using Lemma \ref{eqn:reversetin} we conclude that 
\begin{equation}
\label{eqn:result27}
 \frac{1}{2}\min_{PVM}\big\|  \boldsymbol{\hat{\rho}}_{t} - \boldsymbol{\hat{\rho}}_{SBS,t} \big\|_{1}\leq
\min_{PVM}\bigg(\sum_{k=1}^{N_{E}}\sum_{i=1}^{d_{S}}\sigma_{i}\bigg\|  \boldsymbol{\hat{\rho}}_{x_{i}}^{E^{k},t}-\mathbf{\hat{P}}_{i}^{E^{k},t}\boldsymbol{\hat{\rho}}_{x_{i}}^{E^{k},t} \mathbf{\hat{P}}_{i}^{E^{k},t}\bigg\|_{1}\bigg)+\Gamma(t)
\end{equation}

 $\Gamma(t) :=  \sum_{i}\sum_{j;j\neq i}|\sigma_{i,j}| \prod_{k=N_{E}+1}^{N}| \gamma_{i,j}^{k}(t)|$, $\gamma_{i,j}^{k}(t)=Tr\big[ \boldsymbol{\hat{\rho}}_{x_{i},x_{j}}^{E^{k},t}  \big]$, and $\sigma_{i,j} := \langle i|\boldsymbol{\hat{\rho}}_{S_{0}}|j\rangle$.
In (\ref{eqn:result27}), $\Gamma(t)$ is the decoherence term which is independent of the choice of the PVM minimized over. The decoherence factor is simple to study provided that we are able to compute the traces defining the terms $\gamma^{k}_{i,j}(t)$. The first term in (\ref{eqn:result27}) involves a minimization over all PVM for each value of $t$. Rather than attempting to solve the minimization problem exactly, we will use Theorem \ref{eqn:theoremgram} to bound (\ref{eqn:result27}).\\

To do the latter we assume that the initial states $\boldsymbol{\hat{\rho}}^{E^{k}}$ are pure; the general case will be considered in Section \ref{sec:mixedmixed}. The purity of $\boldsymbol{\hat{\rho}}^{E^{k}}$ implies that the operators $\boldsymbol{\hat{\rho}}_{i}^{E^{k},t}$ ( using the notation defined in (\ref{eqn:purepreserve})) are pure as well for all $i$ since the evolution (\ref{eqn:purepreserve}) is unitary. We will henceforth write $\boldsymbol{\hat{\rho}}^{E^{k}}$ as a projector. 
\begin{equation}
\label{eqn:adef35}
\big|\psi^{k}_{i,t}\big\rangle\big\langle \psi^{k}_{i,t}\big|= \boldsymbol{\hat{\rho}}_{i}^{E^{k},t}
\end{equation}
We now use Theorem \ref{eqn:theoremgram} to estimate the first term (\ref{eqn:result27}), obtaining the following theorem. 
\begin{definition4}
\label{th:thegap}
Using the notation introduced earlier in this section so far,
\begin{equation}
\label{eqn:boundwithbothterms(53)}
  \frac{1}{2}\min_{POVM}\big\|  \boldsymbol{\hat{\rho}}_{t} - \boldsymbol{\hat{\rho}}_{SBS,t}   \big\|_{1}\leq \frac{1}{2}\sum_{k=1}^{N_{E}}\sum_{i=2}^{d_{S}}\sigma_{i}\sum_{s=1}^{i-1}\bigg|\frac{M_{s,i}^{k}}{D^{k}_{s-1,t}D^{k}_{s,t}}\bigg|+\Gamma(t)
\end{equation}

where
\begin{equation}
M_{s,i}^{k}:=\mathbf{det}\begin{pmatrix}
\big\langle \psi^{k}_{1,t}\big|\psi^{k}_{1,t}\big\rangle  & \big\langle \psi^{k}_{1,t}\big|\psi^{k}_{2,t}\big\rangle & \dots & \big\langle\psi^{k}_{1,t}\big|\psi^{k}_{s,t}\big\rangle\\

\big\langle \psi^{k}_{2,t}\big|\psi^{k}_{1,t}\big\rangle & \big\langle \psi^{k}_{2,t}\big|\psi^{k}_{2,t}\big\rangle & \dots & \big\langle \psi^{k}_{2,t}\big|\psi^{k}_{s,t}\big\rangle\\

\vdots & \vdots & \ddots & \vdots\\

\big\langle \psi^{k}_{s-1,t}\big|\psi^{k}_{1,t}\big\rangle & \big\langle \psi^{k}_{s-1,t}\big|\psi^{k}_{2,t}\big\rangle& \dots & \big\langle \psi^{k}_{s-1,t}\big|\psi^{k}_{s,t}\big\rangle\\
\big\langle \psi^{k}_{i,t}\big|\psi^{k}_{1,t}\big\rangle & \big\langle \psi^{k}_{i,t}\big|\psi^{k}_{2,t}\big\rangle & \dots &\big\langle \psi^{k}_{i,t}\big|\psi^{k}_{s,t}\big\rangle\\
\end{pmatrix} 
\end{equation}
\begin{equation}
\label{eqn:theequationstuff}
\hspace{0.3cm}D^{k}_{s,t} : = \mathbf{det}\begin{pmatrix}
\big\langle \psi^{k}_{1,t}\big|\psi^{k}_{1,t}\big\rangle  & \big\langle \psi^{k}_{1,t}\big|\psi^{k}_{2,t}\big\rangle & \dots & \big\langle\psi^{k}_{1,t}\big|\psi^{k}_{s,t}\big\rangle\\

\big\langle \psi^{k}_{2,t}\big|\psi^{k}_{1,t}\big\rangle & \big\langle \psi^{k}_{2,t}\big|\psi^{k}_{2,t}\big\rangle & \dots & \big\langle \psi^{k}_{2,t}\big|\psi^{k}_{s,t}\big\rangle\\

\vdots & \vdots & \ddots & \vdots\\

\big\langle \psi^{k}_{j,t}\big|\psi^{k}_{1,t}\big\rangle & \big\langle \psi^{k}_{j,t}\big|\psi^{k}_{2,t}\big\rangle & \dots &\big\langle \psi^{k}_{s,t}\big|\psi^{k}_{s,t}\big\rangle\\
\end{pmatrix}
\end{equation}
\end{definition4}

Theorem \ref{th:thegap} is an alternative to corollary 1 of \cite{JKthree} which remains unproven.

\section{Mixed Environmental States}
\label{sec:mixedmixed}
\;\;\; The theory we have developed so far considers only the case where $\boldsymbol{\hat{\rho}}_{x_{i}}^{E_{t}^{k}}$ are pure states for all $i$ and $k$. In this section, we will provide the analog of Theorem \ref{eqn:theoremgram} in the case where the environmental degrees of freedom are finite mixtures of pure states. Using a simpler indexing scheme, consider a mixed state of the form $\sum_{i=1}^{N}p_{i}\boldsymbol{\hat{\rho}}_{i}$, where $\sum_{i=1}^N p_i = 1$ and the $\boldsymbol{\hat{\rho}}_i$ are all countably-mixed states; i.e. $\boldsymbol{\hat{\rho}}_{i}=\sum_{k=1}^{M_{i}}\eta_{ik}\boldsymbol{\hat{\rho}}_{ik}$ where all of the $\boldsymbol{\hat{\rho}}_{ik}$ are pure states and $\sum_{k=1}^{M_{i}}\eta_{ik} =1$.
Let us now consider the QSD problem (Appendix \ref{app:QSD}) of finding 
\begin{equation}
\label{eqn:POVMmin1}
\min_{POVM}\sum_{i=1}^{N}p_{i}Tr\Big\{ \boldsymbol{\hat{\rho}}_{i}-\mathbf{\hat{M}}_{i}\boldsymbol{\hat{\rho}}_{i}\mathbf{\hat{M}}_{i}^{\dagger} \Big\}.
\end{equation}
We obtain an upper bound on (\ref{eqn:POVMmin1}) if we minimize over all PVM instead of all POVM.
\begin{equation}
\label{eqn:POVMmin2}
   \min_{POVM}\sum_{i=1}^{N}p_{i}Tr\Big\{ \boldsymbol{\hat{\rho}}_{i}-\mathbf{\hat{M}}_{i}\boldsymbol{\hat{\rho}}_{i}\mathbf{\hat{M}}_{i}^{\dagger}  \Big\} \leq \min_{PVM}\sum_{i=1}^{N}p_{i}Tr\Big\{ \boldsymbol{\hat{\rho}}_{i}-\mathbf{\hat{M}}_{i}\boldsymbol{\hat{\rho}}_{i}\mathbf{\hat{M}}_{i}^{\dagger} \Big\}\leq
\end{equation}   
\begin{equation} 
\label{eqn:dragon5}
\min_{PVM}\sum_{i=1}^{N}p_{i}\Big\|\boldsymbol{\hat{\rho}}_{i}-\mathbf{\hat{M}}_{i}\boldsymbol{\hat{\rho}}_{i}\mathbf{\hat{M}}_{i}^{\dagger} \Big\|_{1}
\end{equation}

The left-hand side of (\ref{eqn:POVMmin2}) can be bounded from below using Theorem \ref{eqn:monta} . 
Namely,
\begin{equation}
\label{eqn:POVMmin3}
\frac{1}{2}\sum_{i=1}^{N}\sum_{j; j\neq i}^{N}p_{i}p_{j}F(\boldsymbol{\hat{\rho}}_{i},\boldsymbol{\hat{\rho}}_{j})\leq\min_{PVM}\sum_{i=1}^{N}p_{i}\Big\|\boldsymbol{\hat{\rho}}_{i}-\mathbf{\hat{P}}_{i}\boldsymbol{\hat{\rho}}_{i}\mathbf{\hat{P}}_{i}\Big\|_{1}
\end{equation}
Expanding the $\boldsymbol{\hat{\rho}}_{i}$ we see that
\begin{equation}
\label{eqn:concavityjoint}
\sqrt{F(\boldsymbol{\hat{\rho}}_{i},\boldsymbol{\hat{\rho}}_{j})} = \sqrt{F\Big(\sum_{k=1}^{M_{i}}\eta_{ik}\boldsymbol{\hat{\rho}}_{ik},\sum_{k=1}^{M_{j}}\eta_{jk}\boldsymbol{\hat{\rho}}_{jk}\Big)} \geq \sum_{k=1}^{\min\{M_{i}, M_{j}\}}\sqrt{\eta_{ik}\eta_{jk}}\sqrt{F\Big(\boldsymbol{\hat{\rho}}_{ik}, \boldsymbol{\hat{\rho}}_{jk}\Big)}
\end{equation}
where we have used Theorem 9.7 of \cite{Nielsen} in (\ref{eqn:concavityjoint}). (\ref{eqn:POVMmin3}) now implies that 
\begin{equation}
\label{eqn:POVMmin4}
\frac{1}{2}\sum_{i=1}^{N}\sum_{j;j\neq i}^{N}p_{i}p_{j}\Bigg(\sum_{k=1}^{\min\{M_{i}, M_{j}\}}\sqrt{\eta_{ik}\eta_{jk}}\sqrt{F\Big(\boldsymbol{\hat{\rho}}_{ik}, \boldsymbol{\hat{\rho}}_{jk}\Big)}\Bigg)^{2}\leq\min_{PVM}\sum_{i=1}^{N}p_{i}\Big\| \boldsymbol{\hat{\rho}}_{i}-\mathbf{\hat{P}}_{i}\boldsymbol{\hat{\rho}}_{i}\mathbf{\hat{P}}_{i}\Big\|_{1}
\end{equation}
This inequality shows that a necessary condition for fully solving the optimization problem (\ref{eqn:POVMmin4}) is that $F\big(\boldsymbol{\hat{\rho}}_{ik}, \boldsymbol{\hat{\rho}}_{jk}\big) = 0 $ for all $i,j,k$ where $i\neq j$. Otherwise, the minimum is not equal to zero. In the case where the $\boldsymbol{\hat{\rho}}_{i}$ are not mixed states the corresponding condition is $F\big(\boldsymbol{\hat{\rho}}_{i},\boldsymbol{\hat{\rho}}_{j}\big) = 0 $ for $i\neq j$, which is what we expect for the states studied in this paper. For the case where the $\boldsymbol{\hat{\rho}}_{i}$ are finite mixtures of pure states, one will be required to analyze the fidelities between elements of any two different mixtures, say $\boldsymbol{\hat{\rho}}_{i}$ and $\boldsymbol{\hat{\rho}}_{j}$, in order to determine the discriminability of the mixture $\sum_{i=1}^{N}\boldsymbol{\hat{\rho}}_{i}$. We will analyze these types of mixtures in the following.

We will be estimating the right-hand side of (\ref{eqn:POVMmin4}). Our approach shall be an adaptation of the methods employed in the proof of Theorem \ref{eqn:theoremgram} and Lemma \ref{eqn:lemmaboundwithgramschmidt}. Under the assumption that $\mathbf{\hat{P}}_{i}$ are projectors we will prove a bound that will be useful for the cases where $F\big(\boldsymbol{\hat{\rho}}_{ik},\boldsymbol{\hat{\rho}}_{jk}\big) = 0 $ for all $k$ when $i\neq j$ and $F\big(\boldsymbol{\hat{\rho}}_{ik},\boldsymbol{\hat{\rho}}_{il}\big) = 0 $ for all $i$ and $l\neq k$ are satisfied only approximately.\\ 

We begin by noting that 
\begin{equation}
\label{eqn:dragon6}
\min_{PVM}\sum_{i=1}^{N}p_{i}\Big\| \boldsymbol{\hat{\rho}}_{i}-\mathbf{\hat{P}}_{i}\boldsymbol{\hat{\rho}}_{i}\mathbf{\hat{P}}_{i}\Big\|_{1} \leq\min_{PVM}\sum_{i=1}^{N}\sum_{k=1}^{M_{i}}p_{i}\eta_{ik}\Big\| \boldsymbol{\hat{\rho}}_{ik}-\mathbf{\hat{P}}_{i}\boldsymbol{\hat{\rho}}_{ik}\mathbf{\hat{P}}_{i}\Big\|_{1} 
\end{equation}
This looks very similar to the PVM QSD problem for pure states considered in the previous two sections. Note that $p_{i}\eta_{ik}$, $1\leq i\leq N$ and $1\leq k\leq M_{i}$, is a probability distribution, but now each element of the PVM $\big\{\mathbf{\hat{P}}_{i}\big\}_{i}$ multiplies all elements $\boldsymbol{\hat{\rho}}_{ik}$ of the $i$th mixture. Following the methods from the previous section, one might suggest implementing the Gram-Schmidt procedure once more in order to obtain an orthonormal set of vectors $\big|\phi_{i}\big\rangle$. However, in this case, the operators $\boldsymbol{\hat{\rho}}_{i}$ are mixed and therefore do not correspond to vectors in the corresponding Hilbert spaces; being able to view the operators  $\boldsymbol{\hat{\rho}}_{i}$ as an ensemble of pure states was one of the key assumptions that led to Theorem \ref{eqn:theoremgram}.\\

We now assume that the $\mathbf{\hat{P}}_{i}$ have the following structure.
\begin{equation}
   \mathbf{\hat{P}}_{i} = \sum_{k=1}^{M_{i}}\mathbf{\hat{P}}_{ik}
\end{equation}
In order to guarantee that the $\sum_{k=1}^{M_{i}}\mathbf{\hat{P}}_{ik}$ is a projector, we will assume that the ranges $\mathbf{\hat{P}}_{ik}$ are orthogonal to each other, $k$ varying from $1$ to $M_{i}$.

Since all of the $\boldsymbol{\hat{\rho}}_{ik}$ are pure states, we may apply the Gram-Schmidt process in order to construct a PVM $\big\{ \mathbf{\hat{P}}_{ik}\big\}_{ik}$. 
Together with $\mathbb{I}-\sum_{i}\sum_{k}\mathbf{\hat{P}}_{ik}$, these operators form a PVM that resolves the identity. 
There are $\sum_{i=1}^{N}M_{i}$ states $\boldsymbol{\hat{\rho}}_{ik}$, let us now arrange them in a sequence
\begin{equation}\
\label{eqn:seqseq}
\vec{\mathscr{V}}:=  \begin{pmatrix}
\boldsymbol{\hat{\rho}}_{11}  & \dots & \boldsymbol{\hat{\rho}}_{1M_{1}}&
\boldsymbol{\hat{\rho}}_{21}  & \dots & \boldsymbol{\hat{\rho}}_{2M_{2}}
 &  \dots &\boldsymbol{\hat{\rho}}_{N1}    & \dots\dots & \boldsymbol{\hat{\rho}}_{NM_{N}}\\
\end{pmatrix}.
\end{equation}
Let us name the $s$th component of this sequence $\mathscr{V}_{s}:= \big|\xi_{s}\big\rangle\big\langle \xi_{s}\big|$. Since the cardinality of the sequence (\ref{eqn:seqseq}) is the same as the cardinality of the set $\{\boldsymbol{\hat{\rho}}_{ik}\}_{ik}$, $(1\leq i\leq N, 1\leq k \leq M_{i})$, there exists a bijection $\mathscr{M}$ mapping every pair $ik$ to a unique s  and viceversa.

Assuming that the $\big|\xi_{s}\big\rangle$ form a linearly independent set we apply the Gram-Schmidt process to obtain the family of orthonormal states
\begin{equation}
\label{eqn:dragon15}
\big|\phi_{1}\big\rangle:=\big|\xi_{1}\big\rangle
\end{equation}
\begin{equation}
\label{eqn:dragon1515}
\big|\phi_{s}\big\rangle := \frac{1}{\alpha_{s}}\bigg\{\big|\xi_{s}\big\rangle - \sum_{k=1}^{s-1}\big\langle \phi_{k}\big|\xi_{s}\big\rangle\big|\phi_{k}\big\rangle \bigg\}, \:\: s\in\{1,2,...,\sum_{i=1}^{N}M_{i}\}
\end{equation}
were as before $\alpha_{i}: = \big\|\big|\xi_{i}\big\rangle - \sum_{k=1}^{i-1}\big\langle \phi_{k}\big|\xi_{i}\big\rangle\big|\phi_{k}\big\rangle\big\| = \sqrt{1-\sum_{k=1}^{i-1}|\big\langle\phi_{k}\big|\xi_{i}\big\rangle|^{2}}$ for $i>1$ and $\alpha_{1} = 1$ are the respective normalization constants. We have thus constructed a PVM $\Big\{\big|\phi_{s}\big\rangle\big\langle \phi_{s}\big|\Big\}_{s}\bigcup \Big\{\mathbb{I}-\sum_{s}\big|\phi_{s}\big\rangle\big\langle \phi_{s}\big| \Big\}$. Defining $\omega_{s}:=p_{i}\eta_{ik}$, where $s = \mathscr{M}\big(ik\big)$, and letting $\mathscr{P}(ik):=i$, and $\mathscr{I}:=\mathscr{P}\circ\mathscr{M}^{-1}$, we may now rewrite and bound the expression
\begin{equation}
\sum_{i=1}^{N}\sum_{k=1}^{M_{i}}p_{i}\eta_{ik}\Big\| \boldsymbol{\hat{\rho}}_{ik}-\mathbf{P}_{i}\boldsymbol{\hat{\rho}}_{ik}\mathbf{P}_{i}\Big\|_{1}
\end{equation}
as follows. 
\smaller
\begin{equation}
\sum_{s}\omega_{s}\bigg\| \big|\xi_{s}\big\rangle\big\langle\xi_{s}\big|-\Big(\sum_{\substack{l\;\ni \\\mathscr{I}(l) = \mathscr{I}(s)}}\big|\phi_{l}\big\rangle\big\langle\phi_{l}\big|\Big)\big|\xi_{s}\big\rangle\big\langle\xi_{s}\big|\Big(\sum_{\substack{l\; \ni \\\mathscr{I}(l) = \mathscr{I}(s)}}\big|\phi_{l}\big\rangle\big\langle \phi_{l}\big|\Big)\bigg\|_{1} = 
\end{equation}
\begin{equation}
\sum_{s}\omega_{s}\Bigg\| \big|\xi_{s}\big\rangle\big\langle \xi_{s}\big|-\big|\phi_{s}\big\rangle\big\langle \phi_{s}\big|\xi_{s}\big\rangle\big\langle\xi_{s}\big|\phi_{s}\big\rangle\big\langle \phi_{s}\big|-\Big(\sum_{\substack{l;l\neq s\;\ni \\\mathscr{I}(l) = \mathscr{I}(s)}}\big|\phi_{l}\big\rangle\big\langle\phi_{l}\big|\Big)\big|\xi_{s}\big\rangle\big\langle\xi_{s}\big|\Big(\sum_{\substack{k;k\neq s\; \ni \\\mathscr{I}(k) = \mathscr{I}(s)}}\big|\phi_{k}\big\rangle\big\langle \phi_{k}\big|\Big)\Bigg\|_{1} \leq
\end{equation}

\begin{equation}
\label{eqn:dragon10}
\sum_{s}\omega_{s}\Big\| \big|\xi_{s}\big\rangle\big\langle \xi_{s}\big|-\big|\phi_{s}\big\rangle\big\langle \phi_{s}\big|\xi_{s}\big\rangle\big\langle\xi_{s}\big|\phi_{s}\big\rangle\big\langle \phi_{s}\big|\Big\|_{1}+\sum_{s}\omega_{s}\Bigg\|\Big(\sum_{\substack{l;l\neq s\; \ni \\\mathscr{I}(l) = \mathscr{I}(s)}}\big|\phi_{l}\big\rangle\big\langle\phi_{l}\big|\Big)\big|\xi_{s}\big\rangle\big\langle\xi_{s}\big|\Big(\sum_{\substack{k;k\neq s\; \ni \\\mathscr{I}(k) = \mathscr{I}(s)}}\big|\phi_{k}\big\rangle\big\langle \phi_{k}\big|\Big)\Bigg\|_{1} \leq
\end{equation}
\begin{equation}
\label{eqn:dragon11}
2\sum_{s}\omega_{s}\sum_{k=1}^{s-1}|\big\langle \phi_{k}\big|\xi_{s}\big\rangle|+\sum_{s}\omega_{s}\sum_{\substack{l;l\neq s\; \ni \\\mathscr{I}(l) = \mathscr{I}(s)}}\Bigg(\sum_{\substack{k;k\neq s\; \ni \\\mathscr{I}(k) = \mathscr{I}(s)}}\Big\|\big|\phi_{l}\big\rangle\big\langle\phi_{l}\big|\xi_{s}\big\rangle\big\langle\xi_{s}\big|\phi_{k}\big\rangle\big\langle \phi_{k}\big|\Big\|_{1}\Bigg)= 
\end{equation}
\begin{equation}
\label{eqn:dragon12}
2\sum_{s}\omega_{s}\sum_{k=1}^{s-1}\big|\big\langle \phi_{k}\big|\xi_{s}\big\rangle\big|+\sum_{s}\omega_{s}\sum_{\substack{l;l\neq s\; \ni \\\mathscr{I}(l) = \mathscr{I}(s)}}\Bigg(\sum_{\substack{k;k\neq s\; \ni \\\mathscr{I}(k) = \mathscr{I}(s)}}\big|\big\langle\phi_{l}\big|\xi_{s}\big\rangle\big\langle\xi_{s}\big|\phi_{k}\big\rangle\big|\Bigg)
\end{equation}
\normalsize
where we have used Lemma \ref{eqn:lemmaboundwithgramschmidt} in going from (\ref{eqn:dragon10}) to (\ref{eqn:dragon11}). Using Lemma \ref{eqn:lemmagramschmidtmatrix} we may explicitly write the terms $|\big\langle \phi_{l}\big|\xi_{s}\big\rangle| $ as Gram determinants and use these to estimate the efficacy of the PVM built from (\ref{eqn:dragon15}) and (\ref{eqn:dragon1515}). $(\ref{eqn:dragon12}) $ may be further bounded as follows. 
\begin{equation}
\label{end}
(\ref{eqn:dragon12}) \leq 3\sum_{s}\omega_{s}\sum_{\substack{l;l\neq s\; \ni \\\mathscr{I}(l) = \mathscr{I}(s)}}|\big\langle \phi_{l}\big|\xi_{s}\big\rangle|
\end{equation}
As already mentioned, this may be better estimated using Lemma \ref{eqn:lemmagramschmidtmatrix}. We state the result (\ref{eqn:dragon12}) as a theorem.\\

\begin{definition4}
\label{eqn:mainrest}
Consider a mixed state of the form $\sum_{i=1}^{N}p_{i}\boldsymbol{\hat{\rho}}_{i}$,\:$\sum_{i=1}^{N}p_{i}=1$, where $\boldsymbol{\hat{\rho}}_i$ are countable mixtures of pure states,\, i.e. $\boldsymbol{\hat{\rho}}_{i}=\sum_{k=1}^{M_{i}}\eta_{ik}\big|\psi_{ik}\big\rangle\big\langle \psi_{ik}\big|$ where $\sum_{k=1}^{M_{i}}\eta_{ik} =1$. Furthermore, assume that the set $\big\{\big|\psi_{ik}\big\rangle\big\}_{ik}$ is linearly independent. Then
\begin{equation}
\min_{PVM}\sum_{i=1}^{N}p_{i}\Big\| \boldsymbol{\hat{\rho}}_{i}-\mathbf{\hat{P}}_{i}\boldsymbol{\hat{\rho}}_{i}\mathbf{\hat{P}}_{i}\Big\|_{1} \leq   
\end{equation}
\begin{equation}
2\sum_{s}\omega_{s}\sum_{k=1}^{s-1}|\big\langle \phi_{k}\big|\xi_{s}\big\rangle|+\sum_{s}\omega_{s}\sum_{\substack{l;l\neq s\; \ni \\\mathscr{I}(l) = \mathscr{I}(s)}}\Bigg(\sum_{\substack{k;k\neq s\; \ni \\\mathscr{I}(k) = \mathscr{I}(s)}}|\big\langle\phi_{l}\big|\xi_{s}\big\rangle\big\langle\xi_{s}\big|\phi_{k}\big\rangle|\Bigg)
\end{equation}
where 
\begin{equation}
\big|\xi_{s}\big\rangle:= \big|\psi_{\mathscr{M}(ik)}\big\rangle
\end{equation}
and $\mathscr{M}$ was defined in the discussion following (\ref{eqn:seqseq})
\begin{equation}
\big|\phi_{1}\big\rangle:=\big|\xi_{1}\big\rangle
\end{equation}
\begin{equation}
\big|\phi_{s}\big\rangle := \frac{1}{\alpha_{s}}\bigg\{\big|\xi_{s}\big\rangle - \sum_{k=1}^{s-1}\big\langle \phi_{k}\big|\xi_{s}\big\rangle\big|\phi_{k}\big\rangle \bigg\}, \:\: s\in\{1,2,...,S:=\sum_{i=1}^{N}M_{i}\}
\end{equation}
and 
\begin{equation}
\alpha_{i}: = \big\|\big|\xi_{i}\big\rangle - \sum_{k=1}^{i-1}\big\langle \phi_{k}\big|\xi_{i}\big\rangle\big|\phi_{k}\big\rangle\big\| = \sqrt{1-\sum_{k=1}^{i-1}|\big\langle\phi_{k}\big|\xi_{i}\big\rangle|^{2}}
\end{equation}
for $i>1$ and $\alpha_{1} = 1$ are the respective normalizing constants.
\end{definition4}
\begin{proof}
The proof follows from the preceding discussion.   
\end{proof}

\section{How general may the $\mathbf{\hat{B}}_{k}$ be?}
\label{eqn:howgen}
\;\;\; We conclude this subsection with the following corollary. The question it sheds light on is: "How general may the $\mathbf{\hat{B}}_{k}$ be and still induce dynamics (\ref{eqn:decmodechan}) leading to SBS states?". Before proceeding we state the following definition. 
\begin{definition}[Rajchman Measure:]
A finite Borel probability measure $\mu$ on $\mathbf{R}$ is called a Rajchman measure if it satisfies 
\begin{equation}
\lim_{|t|\rightarrow\infty}\hat{\mu}(t) = 0 
\end{equation}
where
$\hat{\mu}(t) := \int_{\mathbb{R}} e^{2i\pi tx}d\mu(x)$ \;\;, $t\in \mathbb{R}$.
\end{definition}

\begin{definition4}[The Rajchman Subspace is a Closed Subspace]
\label{eqn:firstresult}
Let $\mathbf{\hat{A}}$ (let $\mathbf{\hat{A}}$ be in general an unbounded operator on a $\mathscr{H}_{S}$) be a self-adjoint operator acting on some arbitrary Hilbert space $\mathscr{H}$, then the set of vectors in $\mathscr{H}$ for which the spectral measure is a Rajchman measure, i.e.
\begin{equation}
\label{eqn:link2}
 \mathscr{H}_{rc}:=\Big\{\big|\psi\big\rangle\;| \;\lim_{|t|\rightarrow \infty} \big\langle \psi\big|e^{-it\mathbf{\hat{A}}}\big|\psi\big\rangle = 0\Big\},
\end{equation}
is a closed subspace that is invariant under the unitary group $e^{-is\mathbf{\hat{A}}}$\cite{gerald}.
\end{definition4}
\begin{Co}[Sufficient conditions for the convergence to SBS for a broad family of multipartite states]
\label{eqn:zelda1000}
Consider the setup contained in equations (\ref{eqn:snake}) through (\ref{eqn:decmodechan}). If for all $k$, $\mathbf{\hat{B}}_{k}$ has a non-empty Rajchman subspace $\mathscr{H}_{E^{k},rc}$ (\cite{gerald}), and $\boldsymbol{\hat{\rho}}^{E^{k}}$ is a finite mixture of pure states in $\mathcal{S}(\mathscr{H}_{E^{k},rc})$, then $\Lambda_{t}\big(\boldsymbol{\hat{\rho}}\big)$ is SBS-asymptotic.
\end{Co}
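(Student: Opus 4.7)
The plan is to combine the trace-distance bound of Theorem \ref{th:thegap} (in the pure case) and Theorem \ref{eqn:mainrest} (in the mixed case) with the Rajchman decay hypothesis on each $\mathbf{\hat{B}}_k$. Using the sub-optimality inequality (\ref{eqn:inq127}), it is enough to show that, for the particular PVM constructed via the Gram--Schmidt procedure, the right-hand side of
$$\frac{1}{2}\min_{PVM}\big\|\boldsymbol{\hat{\rho}}_t - \boldsymbol{\hat{\rho}}_{SBS,t}\big\|_1 \leq \frac{1}{2}\sum_{k=1}^{N_E}\sum_{i=2}^{d_S}\sigma_i\sum_{s=1}^{i-1}\left|\frac{M^k_{s,i}}{D^k_{s-1,t}D^k_{s,t}}\right| + \Gamma(t)$$
tends to zero as $t\to\infty$. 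Both summands are controlled by matrix elements of $e^{-it(x_j-x_i)g_k\mathbf{\hat{B}}_k}$ between vectors that, by hypothesis, lie in the Rajchman subspace $\mathscr{H}_{E^k,rc}$.

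For the decoherence factor, I would decompose each $\boldsymbol{\hat{\rho}}^{E^k}$ spectrally as $\sum_l \mu_{k,l}|\xi_{k,l}\rangle\langle \xi_{k,l}|$ with $|\xi_{k,l}\rangle \in \mathscr{H}_{E^k,rc}$, which gives
$$\gamma^k_{i,j}(t) = \sum_l \mu_{k,l}\langle \xi_{k,l}|e^{it(x_j-x_i)g_k\mathbf{\hat{B}}_k}|\xi_{k,l}\rangle.$$
Provided $x_i \neq x_j$, each summand vanishes as $t\to\infty$ by definition of the Rajchman subspace, so $\Gamma(t) \to 0$ as a finite sum of finite products of vanishing terms (if $\mathbf{\hat{X}}$ has a degenerate spectrum, one first groups repeated eigenvalues into a single pointer index). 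For the Gram--Schmidt piece, every inner product entering the determinants has the form $\langle \xi_{k,l_1}|e^{-it(x_j-x_i)g_k\mathbf{\hat{B}}_k}|\xi_{k,l_2}\rangle$. When $i=j$ and $l_1 = l_2$ it equals one; when $i=j$ and $l_1 \neq l_2$ it is time-independent and vanishes by orthogonality of the spectral decomposition; when $i \neq j$, one invokes the polarization identity together with Theorem \ref{eqn:firstresult}, which guarantees that $\mathscr{H}_{E^k,rc}$ is a subspace and therefore contains all complex linear combinations of $|\xi_{k,l_1}\rangle$ and $|\xi_{k,l_2}\rangle$, so the off-diagonal matrix element equals a sum of four Rajchman-decaying diagonal matrix elements.

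Combining these, every Gram matrix entering Theorems \ref{th:thegap} and \ref{eqn:mainrest} converges to the identity as $t\to\infty$: the denominators $D^k_{s,t}$ tend to $\det(I) = 1$, while the last row of $M^k_{s,i}$ is indexed by $i$ against column indices $n = 1,\ldots,s$ with $s < i$, so each such entry has $n \neq i$ and therefore tends to zero, forcing $M^k_{s,i} \to 0$. Hence every ratio $|M^k_{s,i}/(D^k_{s-1,t}D^k_{s,t})|$ vanishes, which together with $\Gamma(t)\to 0$ and Lemma \ref{eqn:reversetin} yields the SBS-asymptoticity of $\Lambda_t(\boldsymbol{\hat{\rho}})$ via (\ref{eqn:inq127}).

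The main obstacle I anticipate is the mixed-environment step. One has to keep the Gram denominators uniformly bounded away from zero for sufficiently large $t$ (so that the bound does not blow up before the numerators decay) and to promote Rajchman decay from diagonal to off-diagonal matrix elements. Both points are ultimately routine consequences of the subspace structure of $\mathscr{H}_{E^k,rc}$, but verifying the implicit linear-independence hypothesis of Theorem \ref{eqn:mainrest} on the pure components of each initial mixture, and doing so uniformly in $k$, is the delicate part of the argument.
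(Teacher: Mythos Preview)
Your proposal is correct and follows essentially the same route as the paper's proof: invoke Theorems~\ref{th:thegap} and~\ref{eqn:mainrest} to reduce both the decoherence factor $\Gamma(t)$ and the Gram--Schmidt terms to inner products of the form $\langle\psi|e^{-it\alpha\mathbf{\hat{B}}_k}|\phi\rangle$ with $|\psi\rangle,|\phi\rangle\in\mathscr{H}_{E^k,rc}$, then use the subspace/invariance property of the Rajchman subspace (Theorem~\ref{eqn:firstresult}) to conclude that these vanish. Your explicit use of the polarization identity to pass from diagonal to off-diagonal matrix elements, and your remark on keeping the Gram denominators bounded away from zero, spell out details that the paper's proof leaves implicit.
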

\begin{proof}
Using Theorems \ref{th:thegap} and \ref{eqn:mainrest}, $\sum_{k=1}^{N_{E}}\sum_{i=1}^{d_{S}}\sigma_{i}\bigg\|\boldsymbol{\hat{\rho}}_{x_{i}}^{E^{k},t}-\mathbf{\hat{P}}_{i}^{E^{k},t}\boldsymbol{\hat{\rho}}_{x_{i}}^{E^{k},t}\mathbf{\hat{P}}_{i}^{E^{k},t}\bigg\|_{1} $ may be estimated using inner products of distinct $\big|\psi_{ik}\rangle$ ( see (\ref{eqn:morocco4}) and (\ref{eqn:theend7})). Furthermore, for all $i\neq j$, $\Gamma(i,j,t)$ (\ref{eqn:thegamm}) is a product of inner products. The inner products featured in both diagonal terms
\begin{equation}
\sum_{k=1}^{N_{E}}\sum_{i=1}^{d_{S}}\sigma_{i}\bigg\|\boldsymbol{\hat{\rho}}_{x_{i}}^{E^{k},t}-\mathbf{\hat{P}}_{i}^{E^{k},t}\boldsymbol{\hat{\rho}}_{x_{i}}^{E^{k},t} \mathbf{\hat{P}}_{i}^{E^{k},t}\bigg\|_{1}\end{equation}
and the off-diagonal terms $\sum_{i}\sum_{j; j \neq i}^{d_{S}}\big|\sigma_{i,j}\Gamma(i,j,t)\big|$ have the structure $\big\langle \psi\big|e^{-it\alpha\mathbf{\hat{B}}_{k}}\big|\phi\big\rangle$ with $\big|\psi\big\rangle\in \mathscr{H}_{E^{k},rc}$ which implies the claim (using the fact that the Rajchman subspace is invariant under $e^{-is\mathbf{\hat{B}}_{k}}$). 
\end{proof}
\newpage

\section{Conclusions}
In this paper, we have analyzed a generic finite-dimensional open system with von Neumann measurement-type interactions between the central system and the environment. We studied the SBS-asymptotics of such multipartite systems. Working with pure
environment states, we have used the idea of ``self-referencing'', i.e. building from the actual state its SBS version with the help of Gram-Schmidt procedure.  The approach of a state to its SBS limit can be monitored by bounds on their trace distance. Furthermore, we also derive a version of our condition for mixed environmental states which we plan to strengthen in the future.  Another future direction is the application of our scheme to the continuous variable case, i.e. to the case where the observable that is tested for being objectively encoded has a continuous spectrum; e.g. position.

\section{Acknowledgements}
J.K.K acknowledges the support of the Polish National Science Center (NCN) under the Grant No. 2019/35/B/ST2/01896. A.A. and J.W. were partially supported by NSF grant DMS 1911358.
J. W. was also partially supported by the Simons Foundation Fellowship 823539.

\newpage
\appendix 
\addcontentsline{toc}{section}{Appendices}
\section{Quantum State Discrimination}
\label{app:QSD}
We define the Quantum State Discrimination optimization problem (QSD)  \cite{hellstrom} \cite{Montanaro} \cite{qiu} \cite{bae} \cite{barnett}.  Let $\mathscr{H}$ be an arbitrary Hilbert space and let $\mathcal{S}(\mathcal{\mathscr{H}}\big)$ be the space of density operators acting in $\mathscr{H}$. Given a mixture of density operators, 
\begin{equation}
\label{eqn:countablemixture}
\boldsymbol{\hat{\rho}} = \sum_{i=1}^{N}p_{i} \boldsymbol{\hat{\rho}}_{i} 
\end{equation}
where $\sum_{i = 1}^{N}p_{i} = 1$, the theory of QSD aims to find a POVM $\{\mathbf{\hat{M}}_{l}^{\dagger}\mathbf{\hat{M}}_{l}\}_{l=1}^{K}\subset \mathcal{B}(\mathscr{H})$ ( $K\geq N$) which resolves the identity operator of $\mathcal{B}(\mathscr{H})$, and minimizes the object below which we will be referring to as a \emph{probability error}. 
\begin{equation}
\label{eqn:minerror2}
p_{E}\big\{\{p_{i},\boldsymbol{\hat{\rho}}_{i}\}_{i=1}^{N}, \{\mathbf{\hat{M}}_{l}\big\}_{l=1}^{K} \big\}:=1-\sum_{i=1}^{N}p_{i}Tr\big\{\mathbf{\hat{M}}_{i}\boldsymbol{\hat{\rho}}_{i}\mathbf{\hat{M}}^{\dagger}_{i} \big\} 
\end{equation}


\end{document}